\documentclass[11pt,a4paper,draft]{article}

\usepackage[ngerman,english]{babel}
\usepackage[utf8]{inputenc}

\usepackage[DIV=12]{typearea}

\usepackage[affil-it]{authblk}

\usepackage{xparse}
\usepackage{amsmath, amsfonts, amssymb, amsthm}
\usepackage{mathrsfs}
\usepackage{mathtools}
\usepackage{enumerate}
\usepackage{color}

\usepackage[noadjust]{cite}

\usepackage[capitalise]{cleveref}

\hyphenation{Koagulations-kinetik}

\crefname{equation}{}{}

\newcommand{\R}{\mathbb{R}}

\newcommand{\N}{\mathbb{N}}

\newcommand{\ee}{\mathrm{e}}

\DeclareDocumentCommand\dd{ o g d() }{
	\IfNoValueTF{#2}{
		\IfNoValueTF{#3}
			{\mathrm{d}\IfNoValueTF{#1}{}{^{#1}}}
			{\mathinner{\mathrm{d}\IfNoValueTF{#1}{}{^{#1}}\argopen(#3\argclose)}}
		}
		{\mathinner{\mathrm{d}\IfNoValueTF{#1}{}{^{#1}}#2} \IfNoValueTF{#3}{}{(#3)}}
	}

\newcommand{\dx}{\dd{x}}
\newcommand{\dy}{\dd{y}}

\newcommand{\ds}{\dd{s}}
\newcommand{\dt}{\dd{t}}

\newcommand{\del}{\partial}
\newcommand{\eps}{\varepsilon}

\newcommand{\mom}{\mathfrak{m}}
\newcommand{\s}{\mathfrak{s}}

\DeclareMathOperator{\sgn}{sgn}

\newcommand{\vcc}{\vcentcolon}

\DeclarePairedDelimiter\abs{\lvert}{\rvert}
\DeclarePairedDelimiter\norm{\Vert}{\rVert}

\theoremstyle{plain}
\newtheorem{theorem}{Theorem}[section]
\newtheorem{lemma}[theorem]{Lemma}
\newtheorem{proposition}[theorem]{Proposition}

\theoremstyle{definition}
\newtheorem{definition}[theorem]{Definition}

\theoremstyle{remark}
\newtheorem{remark}[theorem]{Remark}

\numberwithin{equation}{section}

 \title{Smoluchowski's discrete coagulation equation with forcing}

 \author{Christian Kuehn\thanks{Technical University of Munich, Faculty of Mathematics, Research Unit 
``Multiscale and Stochastic Dynamics'', 85748 Garching b.~M\"unchen, Germany, \texttt{ckuehn@ma.tum.de}} }
 \author{Sebastian Throm\thanks{Technical University of Munich, Faculty of Mathematics, Research Unit 
``Multiscale and Stochastic Dynamics'', 85748 Garching b.~M\"unchen, Germany, \texttt{throm@ma.tum.de}} }
 \affil{}
 

 \date{}

\begin{document}

\maketitle

\begin{abstract}
In this article we study an extension of Smoluchowski's discrete coagulation equation, where particle in- and output takes place. This model is frequently used to describe aggregation processes in combination with sedimentation of clusters. More precisely, we show that the evolution equation is well-posed for a large class of coagulation kernels and output rates. Additionally, in the long-time limit we prove that solutions converge to a unique equilibrium with exponential rate under a suitable smallness condition on the coefficients.
\end{abstract}

\textbf{Keywords:} discrete Smoluchowski equation, coagulation, forcing, equilibrium, exponential convergence.

\section{Introduction}\label{Sec:intro}

\subsection{Forced coagulation and coagulation-fragmentation}

In this article we consider the discrete coagulation equation
\begin{equation}\label{eq:Smol:1}
 \frac{\dd}{\dt}c_{k}=\frac{1}{2}\sum_{\ell=1}^{k-1}a_{k-\ell,\ell}c_{k-\ell}c_{\ell}-c_{k}\sum_{\ell=1}^{\infty}a_{k,\ell}c_{\ell}+s_{k}-r_{k}c_{k},\qquad c_k=c_k(t),
\end{equation}
which is used to describe the time evolution of a system of aggregating particles under the effect of external forcing. In particular, this equation is frequently used in cloud physics \cite{Fri00}, in oceanography \cite{BoG98}, and in chemistry \cite{HeZ85, AMY12}. More precisely, \eqref{eq:Smol:1} is an extension of Smoluchowski's original model \cite{Smo17} which reads
\begin{equation}\label{eq:coag}
 \frac{\dd}{\dt}c_{k}=\frac{1}{2}\sum_{\ell=1}^{k-1}a_{k-\ell,\ell}c_{k-\ell}c_{\ell}-c_{k}\sum_{\ell=1}^{\infty}a_{k,\ell}c_{\ell},
\end{equation}
and which corresponds to the choice $s_{k}=0$ and $r_{k}=0$ in~\eqref{eq:Smol:1}. 

The interpretation of~\eqref{eq:Smol:1} is the following. The quantity $c_{k}$ represents the density of particles of size/mass $k\in\N$ in the system, while we assume here that each cluster consists of a certain number of atoms. The time evolution of $c_{k}$ is then on the one hand determined by the production of clusters of size $k$ due to the coagulation of particles of sizes $k-\ell$ and $\ell$. This effect is taken into account by the first sum on the right-hand side of~\eqref{eq:Smol:1} and the symmetry of the coagulation process leads to the factor $1/2$. Conversely, clusters of size $k$ may aggregate with clusters of any size $\ell$ to form larger particles and this results in a decrease of $c_{k}$ which is considered by the second sum on the right-hand side of~\eqref{eq:Smol:1}. Additionally, in contrast to~\eqref{eq:coag}, we allow in~\eqref{eq:Smol:1} also that clusters of size $k$ are injected into the system with rate $s_{k}$. On the other hand, clusters are removed from the system with a rate given by $r_{k}c_{k}$. In the applications mentioned before, this removal term often corresponds to sedimentation of particles due to gravity.

Let us consider some important examples for the coefficients $a_{k,\ell}$ and $r_{k}$ appearing in~\eqref{eq:Smol:1}. In Smoluchowski's derivation of the coagulation equation in~\cite{Smo17}, he assumed that the particles in the system move freely according to Brownian motion and aggregate immediately once they touch. In this situation he obtained the following coagulation coefficient
\begin{equation}\label{eq:kernel:BM}
 a_{k,\ell}=(k^{1/3}+\ell^{1/3})(k^{-1/3}+\ell^{-1/3}).
\end{equation}
Another important example is given by
\begin{equation}\label{eq:kernel:shear}
 a_{k,\ell}=(k^{1/3}+\ell^{1/3})^{3}
\end{equation}
which models coagulation due to linear shear flow \cite{Ald99}. In the mathematical literature kernels of the form
\begin{equation}\label{eq:kernel:math}
 a_{k,\ell}=k^{\alpha}\ell^{\beta}+k^{\beta}\ell^{\alpha}
\end{equation}
are frequently used \cite{Hay87,HeZ85}. As already mentioned above, the removal coefficient $r_{k}$ typically models the effect of sedimentation. In this situation, we have the scaling $r_{k}\sim k^{\gamma}$ where the exponent $\gamma$ is related to the fractal dimension $D$ of the aggregates through $\gamma=1-1/D$ (see~\cite{BoG98}). More precisely, $k^{\gamma}$ corresponds to the terminal settling velocity which is determined by Stokes flow. Assuming, as usual, that all clusters are spheres which are indexed by their total mass/volume $k\in\N$, we obtain the scaling $r_{k}\sim k^{2/3}$ \cite{Fri00}. In~\cite{LiC71} the more sophisticated relation $r_{k}=C k^{2/3}[1+(0.084+0.0264 \ee^{-16.7 k^{1/3}})/(k^{1/3})]$ can be found. 

Concerning the injection rate $s_{k}$, a typical assumption is that only monomers are introduced, i.e.\@ $s_{k}=0$ for all $k>1$ (e.g.\@ \cite{Hay87,HeZ85,LuP76}). In this, work we will allow for more general sources. In fact, we will only require that the sequence $s_{k}$ decreases sufficiently fast for $k\to\infty$ (see the assumption~\eqref{eq:Ass:s} below).

Another intrinsic motivation to study~\eqref{eq:Smol:1} is to work towards a coupling with additional differential equations, which is a common theme in the context of reaction-diffusion systems. The idea is that $r_k$ and $s_k$, instead of being fixed, could be driven themselves. Yet, before this extension can be achieved, one should understand the case~\eqref{eq:Smol:1}.

A further important extension of Smoluchowski's coagulation model is the well-known coagulation-fragmentation equation which reads
\begin{equation}\label{eq:coag:frag}
 \frac{\dd}{\dt}c_{k}=\frac{1}{2}\sum_{\ell=1}^{k-1}a_{k-\ell,\ell}c_{k-\ell}c_{\ell}-c_{k}\sum_{\ell=1}^{\infty}a_{k,\ell}c_{\ell}+\sum_{\ell=1}^{\infty}b_{k,\ell}c_{k+\ell}-\frac{1}{2}\sum_{\ell=1}^{k-1}b_{k,k-\ell}c_{k}.
\end{equation}
In this model, contrary to the classical case~\eqref{eq:coag}, clusters are additionally allowed to split in smaller pieces and the interpretation is then analogous to the pure coagulation equation. Namely, the third sum on the right-hand side of~\eqref{eq:coag:frag} counts the particles of size $k$ which are created due to the breakup of a cluster of size $k+\ell$ while the fourth sum accounts for the loss of aggregates of size $k$ due to fragmentation.

A crucial question in all of the three models~\cref{eq:Smol:1,eq:coag,eq:coag:frag} concerns the long-time behaviour of the solutions. In comparison to~\cref{eq:Smol:1,eq:coag:frag}, Smoluchowski's original equation~\eqref{eq:coag} represents a special case since here clusters can only grow while no smaller particles are created or inserted in the system. Thus, the existence of a stationary state cannot be expected but the long-time behaviour is conjectured to be self-similar. This is known as the \emph{scaling hypothesis} \cite{LaM04} but, except for special \emph{solvable kernels} for which solutions can be computed explicitly, this conjecture is still unproven. 

In contrast to this, for~\cref{eq:coag:frag,eq:Smol:1} under reasonable assumptions on $b_{k,\ell}$ or $s_{k}$ and $r_{k}$ it is natural to expect that the system approaches an equilibrium state as $t\to\infty$. Although~\eqref{eq:coag:frag} has been studied intensively over the last decades, such a convergence could not been established for general coefficients $a_{k,\ell}$ and $b_{k,\ell}$. In fact, convergence to equilibrium for~\eqref{eq:coag:frag} is usually established under the \emph{detailed balance condition} where $a_{k,\ell}$ and $b_{k,\ell}$ are related due to
\begin{equation}\label{eq:detailed:balance}
 a_{k,\ell}Q_{k}Q_{\ell}=b_{k,\ell}Q_{k+\ell}
\end{equation}
for a non-negative $Q=(Q_{k})_{k\in\N}$ satisfying $Q\not\equiv 0$ and $\sum_{k=1}^{\infty}k Q_{k}<\infty$ (see e.g.\@ \cite{Car92,CaC94,Can07}). Without~\eqref{eq:detailed:balance} much less is known on the long-time behaviour of solutions to~\eqref{eq:coag:frag}. Exceptions are given on the one hand by~\cite{DuS96a}, where solvable kernels are considered for which explicit computations can be performed. On the other hand, in~\cite{FoM04} convergence to equilibrium for~\eqref{eq:coag:frag} has been shown without assuming the detailed balance condition. The proof relies on a functional inequality, which can be derived under a smallness condition on the first moment of the initial data and the assumption that fragmentation dominates coagulation in a suitable sense. The latter condition in known as \emph{strong fragmentation}. 

From a mathematical point of view, solutions to~\eqref{eq:Smol:1} and~\eqref{eq:coag:frag} exhibit a similar behaviour. We will in fact extend the method developed in~\cite{FoM04} to show convergence to a unique equilibrium in~\eqref{eq:Smol:1}. More precisely, under the conditions on $r_{k}$ which we impose (see~\eqref{eq:Ass:r}), the removal term $r_{k}c_{k}$ causes that solutions to~\eqref{eq:Smol:1} exhibit analogous properties as solutions to~\eqref{eq:coag:frag} in the strong fragmentation regime. However, there is also a fundamental difference between~\eqref{eq:Smol:1} and~\eqref{eq:coag:frag}. Precisely, at least in the situation of strong fragmentation, solutions to~\eqref{eq:coag:frag} conserve the total mass, i.e.\@ the first moment $\mom_{1}(t)\vcc=\sum_{k=1}^{\infty}kc_{k}(t)$ \cite{Cos95a,ELM03}. Conversely, such a behaviour cannot be expected for~\eqref{eq:Smol:1} where $\mom_{1}$ in general satisfies the equation (see~\eqref{eq:first:moment})
\begin{equation*}
 \frac{\dd}{\dt} \mom_{1}=\sum_{k=1}^{\infty}ks_{k}-\sum_{k=1}^{\infty}kr_{k}c_{k}(t).
\end{equation*}

In contrast to~\eqref{eq:coag:frag}, the model~\eqref{eq:Smol:1} has been studied much less intensively in the mathematical literature. Let us recapitulate some of the most important results. In~\cite{Spo85} local (in time) existence of solutions is shown for~\eqref{eq:Smol:1} under the condition that $\sum_{k=1}^{\infty}ks_{k}<\infty$ and $a_{k,\ell}\leq d_{k}d_{\ell}$ with $d_{n}/n\to 0$ as $n\to\infty$. Some explicit formulas for solvable coefficients $a_{k,\ell}$ and $r_{k}$ can be found in~\cite{Hen84,HeZ85}. Moreover, in~\cite{CRW06,CPS07} for the special choice $r_{k}\equiv 0$ (i.e.\@ no output), $a_{k,1}=a_{1,k}=1$ and $a_{k,\ell}=0$, $s_{k}=0$ if $k,\ell>1$ the long-time behaviour of solutions to~\eqref{eq:Smol:1} has been shown to be self-similar. Some formal considerations can be found in~\cite{Hay87}. The existence of stationary solutions to~\eqref{eq:Smol:1} as well as their uniqueness has been considered in~\cite{CrS82,Whi82}. 

In this work, we will establish the well-posedness of~\eqref{eq:Smol:1} for a large class of coefficients and moreover, we will establish convergence to a unique steady state under an additional smallness condition.

\subsection{Assumptions on the coefficients and main results}

In this section, we collect the assumptions on the coefficients of~\eqref{eq:Smol:1} and we state the main results, which we are going to show. 

We assume that the coagulation kernel $a_{k,\ell}$ is symmetric, non-negative and satisfies a suitable growth condition, i.e.\@ we have
\begin{equation}\label{eq:Ass:a}
 a_{k,\ell}=a_{\ell,k}\quad  \text{and}\quad 0\leq a_{k,\ell}\leq A_{*} (k^{\alpha}\ell^{\beta}+k^{\beta}\ell^{\alpha})\quad \forall k,\ell\in\N \quad \text{with } \alpha,\beta\in[0,1],~\alpha\leq \beta,
\end{equation}
where $A_*>0$ is a constant. This covers in particular the two examples~\cref{eq:kernel:BM,eq:kernel:shear}. Moreover, we assume that the removal coefficients grow sufficiently fast for large cluster sizes, i.e.\@
\begin{equation}\label{eq:Ass:r}
 r_{k}\geq R_{*} k^{\gamma}\quad \text{for all }k\in\N \quad \text{with }R_{*}>0 \text{ and }\gamma>\max\{0,\alpha+\beta-1\}.
\end{equation}
Note that the condition on the exponent $\gamma$ is crucial in our analysis in order to obtain suitable moment estimates. We emphasise that this includes in particular the important example of sedimentation with particles coagulating due to Brownian motion, i.e.\@ $r_{k}=k^{2/3}$ and $a_{k,\ell}$ as in~\eqref{eq:kernel:BM}.

Finally, we require that the source term $s_{k}$ is non-negative and decays faster than any power law, i.e.\@ we assume that
\begin{equation}\label{eq:Ass:s}
 s_{k}\geq 0\quad \text{and for each }\mu\geq 0 \text{ there exists }\s_{\mu}>0\text{ such that } \sum_{k=1}^{\infty}k^{\mu}s_{k}\leq \s_{\mu}.
\end{equation}
The latter condition is not really a restriction in typical applications since usually there are no huge clusters injected into the system, i.e.\@ $s_{k}$ is constantly zero for large $k$.

For later use, we also introduce the scaled coefficients
\begin{equation}\label{eq:reduce:coefficients}
 \widehat{A}_{*}\vcc=\frac{A_{*}}{R_{*}},\qquad \text{and}\qquad \widehat{\s}_{\mu}=\frac{\s_{\mu}}{R_{*}}\quad \text{for all } \mu\geq 0.
\end{equation}
Moreover, for $\mu\geq 0$ we use the notation
\begin{equation*}
 \ell_{\mu}^{1}\vcc=\Bigl\{(c_k)_{k\in\N}\;\Big|\; c_{k}\in[0,\infty) \text{ for all }k\in\N\text{ and }\sum_{k=1}^{\infty}k^{\mu}c_{k}<\infty\Bigr\}
\end{equation*}
for the weighted $\ell^{1}$-spaces. 

Throughout this work we use the following notion of solutions, which is adapted from~\cite{FoM04}. 

\begin{definition}\label{Def:solution}
 A sequence $c(t)=(c_{k}(t))_{k\in\N}$ with $c_{k}\colon[0,T)\to[0,\infty)$ continuous for all $k\in\N$ is a solution to~\eqref{eq:Smol:1} on $[0,T)$ with initial condition $c^{\text{in}}=(c^{\text{in}}_{k})_{k\in\N}$ provided
 \begin{enumerate}[(i)]
  \item for each $k\in\N$ the equation~\eqref{eq:Smol:1} is satisfied for all $t\in(0,T)$,
  \item for each $\mu\geq 1$ we have $c\in L^{\infty}([0,T),\ell_{1}^{1})\cap C^{1}((0,T),\ell_{\mu}^{1})$,
  \item $c_{k}(0)=c_{k}^{\text{in}}$ for all $k\in\N$.
 \end{enumerate}
 We call the sequence $c$ a global solution if $T=\infty$. Moreover, $c$ is an equilibrium of~\eqref{eq:Smol:1} if $c$ is a stationary (global) solution of~\eqref{eq:Smol:1}, i.e.\@ $c$ does not depend on $t$.
\end{definition}

\begin{remark}
 In particular, we consider throughout this work only initial data $c^{\text{in}}$ satisfying $c^{\text{in}}\in\ell_{1}^{1}$.
\end{remark}

The first main result that we will show in this work is the existence of solutions to~\eqref{eq:Smol:1}.

\begin{theorem}\label{Thm:existence:evolution}
 Let \cref{eq:Ass:a,eq:Ass:r,eq:Ass:s} be satisfied and let $c^{\text{in}}=(c_{k}^{\text{in}})_{k\in\N}\in\ell_{1}^{1}$. Then there exists at least one solution $c=(c_{k})_{k\in\N}$ of~\eqref{eq:Smol:1} with initial condition $c^{\text{in}}$.
\end{theorem}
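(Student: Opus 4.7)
The plan is to proceed via a standard truncation--compactness argument adapted to the forcing terms. For each $N\in\N$ I would consider the truncated system
\begin{equation*}
 \frac{\dd}{\dt}c_{k}^{N}=\frac{1}{2}\sum_{\ell=1}^{k-1}a_{k-\ell,\ell}c_{k-\ell}^{N}c_{\ell}^{N}-c_{k}^{N}\sum_{\ell=1}^{N-k}a_{k,\ell}c_{\ell}^{N}+s_{k}-r_{k}c_{k}^{N}
\end{equation*}
for $1\le k\le N$, with initial data $c_k^N(0)=c_k^{\text{in}}$. Picard--Lindel\"of gives local existence and uniqueness, and a quasi-positivity check (the right-hand side is non-negative whenever $c_k^N=0$) confines solutions to $[0,\infty)^N$.

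The core of the argument is a priori estimates that are uniform in $N$. Testing the equation against $k$ and using the usual symmetrisation for the coagulation term yields
\begin{equation*}
 \frac{\dd}{\dt}\sum_{k=1}^{N}kc_{k}^{N}\le\sum_{k=1}^{\infty}ks_{k}-R_{*}\sum_{k=1}^{N}k^{1+\gamma}c_{k}^{N},
\end{equation*}
which furnishes a uniform bound on $\sup_{[0,T]}\mom_1$ and on $\int_0^T\mom_{1+\gamma}(t)\,\dt$. For a general weight $k^\mu$ with $\mu\ge 1$, the elementary inequality $(k+\ell)^\mu-k^\mu-\ell^\mu\le C_\mu(k\ell^{\mu-1}+k^{\mu-1}\ell)$ combined with~\eqref{eq:Ass:a} and~\eqref{eq:Ass:s} gives, up to symmetrisation in $\alpha,\beta$,
\begin{equation*}
 \frac{\dd}{\dt}\sum_{k=1}^{N}k^{\mu}c_{k}^{N}\le C_{\mu}\,\mom_{1+\alpha}(t)\,\mom_{\mu+\beta-1}(t)+\s_{\mu}-R_{*}\sum_{k=1}^{N}k^{\mu+\gamma}c_{k}^{N}.
\end{equation*}
Because $\gamma>\alpha+\beta-1$, interpolating $\mom_{\mu+\alpha+\beta-1}$ between $\mom_1$ and $\mom_{\mu+\gamma}$ lets the quadratic coagulation contribution be absorbed into the removal term modulo a constant depending only on $\mom_1$. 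Iterating the estimate on moments produces uniform-in-$N$ bounds on $\mom_\mu$ on $[0,T]$ for every $\mu\ge 1$. In particular the truncated solutions are global.

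With these estimates the passage to the limit $N\to\infty$ is routine. For each fixed $k$ the bound on higher moments implies that $\bigl|\tfrac{\dd}{\dt}c_k^N\bigr|$ is uniformly bounded on $[0,T]$, so Arzel\`a--Ascoli together with a diagonal extraction yields a subsequence converging locally uniformly in $t$, for every $k$, to some non-negative $c=(c_k)_{k\in\N}$. The uniform higher-moment bounds provide the dominating majorants needed to pass to the limit inside the infinite series $\sum_\ell a_{k,\ell}c_\ell^{N_j}$, while Fatou's lemma gives $c\in L^\infty([0,T),\ell_1^1)$. Reinserting the limit into~\eqref{eq:Smol:1} and using the moment bounds on the right-hand side gives the $C^1((0,T),\ell_\mu^1)$ regularity required by~\cref{Def:solution}.

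The main technical obstacle is the moment estimate: one must check carefully that the strict inequality $\gamma>\alpha+\beta-1$ is enough to absorb the bilinear coagulation contribution into $R_*\mom_{\mu+\gamma}$ via interpolation at every moment level, uniformly in $N$ and on arbitrary time intervals. Once those estimates are in place the compactness step and the identification of the limit equation are standard.
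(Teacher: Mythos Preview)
Your overall strategy matches the paper's (truncation, quasi-positivity, a priori moment bounds, Arzel\`a--Ascoli, passage to the limit), but there is a genuine gap in the moment step.

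You assert ``uniform-in-$N$ bounds on $\mom_\mu$ on $[0,T]$ for every $\mu\ge 1$''. This cannot be true for general $c^{\text{in}}\in\ell_1^1$: for $\mu>1$ the initial truncated moment $\mom_\mu^N(0)=\sum_{k=1}^{N}k^\mu c_k^{\text{in}}$ need not stay bounded as $N\to\infty$, so no Gr\"onwall argument starting at $t=0$ can give a uniform bound on $[0,T]$. What actually happens is an \emph{instantaneous regularisation} coming from the removal term: from the differential inequality for $\mom_\mu^N$ one derives, via interpolation $\mom_\mu^N\le(\mom_1^N)^{\gamma/(\mu+\gamma-1)}(\mom_{\mu+\gamma}^N)^{(\mu-1)/(\mu+\gamma-1)}$ and a \emph{nonlinear} comparison lemma of the form $f'+\Lambda f^{1+\rho}\le\Xi$, the estimate
\[
 \mom_\mu^N(t)\le C_\mu\bigl(1+t^{-(\mu-1)/\gamma}\bigr),\qquad t>0,
\]
which is singular at $t=0$ but independent of $N$ and of $\mom_\mu^N(0)$. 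This is the key input missing from your sketch. Without it your ``iteration'' on moments has nowhere to start.

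The singularity then has to be tracked through the limit step. Two points:
\begin{itemize}
 \item The equicontinuity needed for Arzel\`a--Ascoli should be obtained from the \emph{first} moment alone (use $a_{k,\ell}\le 2A_* k^\beta\ell$), giving a uniform bound on $|\tfrac{\dd}{\dt}c_k^N|$ on all of $[0,T]$; your proposed route via higher moments would only give a bound on $(0,T]$.
 \item In the infinite loss sum $\sum_\ell a_{k,\ell}c_\ell^N$, the tail control uses the higher-moment bound, so the convergence is locally uniform only on compact subsets of $(0,\infty)$. One therefore passes to the limit in the \emph{integrated} equation and uses that $t^{-(\mu-1)/\gamma}$ is integrable at $0$ for $\mu$ chosen close enough to $1$ (here one needs $\mu<1+\gamma$ together with $\mu>\max\{2-\alpha-\beta,1\}$).
\end{itemize}
Once you replace the uniform-on-$[0,T]$ claim by this regularising estimate and adjust the limit argument accordingly, the rest of your outline goes through essentially as in the paper.
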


Under further restrictions on either the exponents $\beta$ and $\gamma$ or on the coefficients $A_{*}$, $\s_{1}$ and $R_{*}$  we also obtain uniqueness of solutions

\begin{proposition}\label{Prop:uniqueness}
 Let \cref{eq:Ass:a,eq:Ass:r,eq:Ass:s} be satisfied. If either $\beta<\gamma$ or alternatively the conditions of Theorem~\ref{Thm:existence:stat:sol} below are satisfied then there exists at most one solution to~\eqref{eq:Smol:1}.
\end{proposition}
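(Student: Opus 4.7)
The setup is standard. Let $c^{(1)},c^{(2)}$ be two solutions to~\eqref{eq:Smol:1} with the same initial data $c^{\text{in}}$, set $d_{k}:=c^{(1)}_{k}-c^{(2)}_{k}$ and $\sigma_{k}:=\sgn(d_{k})$, and aim to show $D(t):=\sum_{k\ge 1}k\,|d_{k}(t)|$ vanishes identically via a Gronwall inequality. The source term $s_{k}$ drops out of the equation for $d_{k}$ since it does not depend on the solution; only the bilinear coagulation contributions and the linear removal $-r_{k}d_{k}$ remain. Using the identity $c^{(1)}_{k}c^{(1)}_{\ell}-c^{(2)}_{k}c^{(2)}_{\ell}=d_{k}c^{(1)}_{\ell}+c^{(2)}_{k}d_{\ell}$, multiplying the equation for $d_{k}$ by $k\sigma_{k}$, summing over $k$, substituting $k-\ell\rightsquigarrow m$ in the gain term, exploiting the symmetry of $a_{k,\ell}$, and bounding the signed cross-terms by $\sigma_{k}d_{j}\le|d_{j}|$ produces
\begin{equation}\label{eq:plan:diffineq}
 \frac{\dd}{\dt} D(t) \le \sum_{k,\ell\ge 1} a_{k,\ell}\Bigl[\tfrac{\ell-k}{2} c^{(1)}_{\ell} + \tfrac{k+3\ell}{2} c^{(2)}_{\ell}\Bigr]|d_{k}| - \sum_{k\ge 1} k r_{k} |d_{k}|.
\end{equation}
The termwise differentiation and the Fubini interchanges are legitimate on $(0,T)$ because of the $C^{1}((0,T),\ell^{1}_{\mu})$ regularity built into \cref{Def:solution}.

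\textbf{Case $\beta<\gamma$.} Bounding the bracketed quantity crudely by $(k+\ell)(c^{(1)}_{\ell}+2c^{(2)}_{\ell})$ and using $a_{k,\ell}\le 2A_{*}k^{\beta}\ell^{\beta}$ (valid because $\alpha\le\beta$ and $k,\ell\ge 1$), the first sum in~\eqref{eq:plan:diffineq} is dominated by
\begin{equation*}
 C\bigl(\mom_{\beta}(c^{(1)})+\mom_{\beta}(c^{(2)})\bigr)\sum_{k}k^{\beta+1}|d_{k}| + C\bigl(\mom_{\beta+1}(c^{(1)})+\mom_{\beta+1}(c^{(2)})\bigr)\sum_{k}k^{\beta}|d_{k}|.
\end{equation*}
The crucial observation is the elementary splitting $k^{\beta+1}\le\eps\, k^{\gamma+1}+C_{\eps}\,k$, valid for every $k\ge 1$ thanks to the strict inequality $\beta<\gamma$. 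Combined with $\sum_{k}kr_{k}|d_{k}|\ge R_{*}\sum_{k}k^{\gamma+1}|d_{k}|$, this lets us absorb the $k^{\beta+1}$-piece into the dissipative contribution, provided $\eps$ is chosen small enough in terms of the (locally finite) a~priori moments $\mom_{\beta+1}(c^{(i)})$. The residual $k^{\beta}$-term is controlled by $D$ since $k^{\beta}\le k$. Thus \eqref{eq:plan:diffineq} collapses to $\frac{\dd}{\dt}D(t)\le\Phi(t)D(t)$ with $\Phi\in L^{1}_{\text{loc}}((0,T))$, and since $D$ is continuous at $0$ with $D(0)=0$ (by dominated convergence using the common $\ell^{1}_{1}$ bound), Gronwall yields $D\equiv 0$.

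\textbf{Smallness case and main obstacle.} Under the hypotheses of \cref{Thm:existence:stat:sol}, uniform-in-time bounds on $\mom_{\beta}(c^{(i)}),\mom_{\beta+1}(c^{(i)})$ are available and the smallness of $\widehat{A}_{*}=A_{*}/R_{*}$ makes the coagulation contribution in~\eqref{eq:plan:diffineq} directly dominated by the removal term, bypassing the power-counting trick. The same Gronwall conclusion then applies. The main technical difficulty in both cases lies in the correct accounting of the $k$-powers after expanding $a_{k,\ell}\times[\text{bracket}]$ and in verifying that \emph{every} such contribution can be absorbed into $\sum_{k}kr_{k}|d_{k}|$; this is precisely where the strict inequality $\beta<\gamma$ (or the smallness of $\widehat{A}_{*}$) enters. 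A secondary point is the integrability of $\Phi$ near $t=0$: since the uniform-in-$t$ bound from \cref{Def:solution} applies only in $\ell^{1}_{1}$, one either invokes the instantaneous smoothing of higher moments established during the proof of \cref{Thm:existence:evolution} or first proves uniqueness for initial data in a sufficiently weighted space and removes that restriction by approximation.
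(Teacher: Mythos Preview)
Your plan is correct and shares the paper's overall strategy---a Gronwall argument on a weighted $\ell^{1}$-norm of the difference---but the execution differs. Rather than bounding the bracket crudely and then invoking the splitting $k^{\beta+1}\le\eps k^{\gamma+1}+C_{\eps}k$ to feed the $k^{\beta+1}$-piece into the removal term, the paper uses its functional inequality (\cref{Lem:functional:inequality:1:b}) with $\mu=1$: the estimate $a_{k,\ell}\bigl[(k+\ell)-k+\ell\bigr]=2\ell\,a_{k,\ell}\le 6A_{*}\,k\,\ell^{1+\beta}$ places the high power on~$\ell$ rather than on~$k$, giving directly $D'\le\bigl(6A_{*}\,\mom_{1+\beta}(c+d)-R_{*}\bigr)D$ with no $\sum_{k}k^{\beta+1}|d_{k}|$ term to absorb. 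The paper then simply \emph{drops} the $-R_{*}$ contribution and relies on the moment bound $\mom_{1+\beta}(t)\le C(1+t^{-\beta/\gamma})$ from \cref{Lem:moment:general}, which is integrable at~$0$ precisely because $\beta<\gamma$. So in the paper the strict inequality enters solely through this integrability, not through any absorption; your $\eps$-trick is harmless but redundant, since after the splitting you still need the $\mom_{\beta+1}$ coefficient to be locally integrable near~$0$---what you flag as the ``secondary point'' is in fact the primary one. For the smallness case the paper does not stay at $\mu=1$: it applies \cref{Lem:functional:inequality:2} with the~$\mu$ from \cref{Thm:existence:stat:sol}, obtaining $D_{\mu}'\le-\kappa D_{\mu}$ and concluding by Gronwall.
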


A further main result we are going to prove concerns the existence of a unique equilibrium under a smallness condition on the coefficients. We note that the latter is mainly important for the uniqueness part. Concerning existence, one could in fact obtain a much stronger result (see also Remark~\ref{Rem:equilibria}).

\begin{theorem}\label{Thm:existence:stat:sol}
 Let \cref{eq:Ass:a,eq:Ass:r,eq:Ass:s} be satisfied and for $\mu\geq 1$ such that $\mu+\beta> \max\{2-\alpha-\beta,1\}$ assume that either
 \begin{align*}
  &16C_{\mu}A_{*}\Bigl(\frac{(2^{\mu+\beta}(\mu+\beta))^{p}q^{1-p}}{p}\widehat{A}_{*}^{p}\widehat{\s}_{1}^{1+p}+\widehat{\s}_{\mu+\beta}\Bigr)-R_{*}<0
\shortintertext{or}
 &8C_{\mu}A_{*}\Bigl(\frac{2^{2+\rho_{\mu+\beta}}(2^{\mu+\beta}(\mu+\beta))^{p}q^{1-p}}{p} \widehat{A}_{*}^{p}\widehat{\s}_{1}^{1+p+\rho_{\mu+\beta}}+2^{2+\rho_{\mu+\beta}}\widehat{\s}_{\mu+\beta}\widehat{\s}_{1}^{\rho_{\mu+\beta}}\Bigr)^{\frac{1}{1+\rho_{\mu+\beta}}}-R_{*}<0,
\end{align*}
where $C_{\mu}\vcc=2^{\max\{\mu-2,0\}}\max\{\mu,\mu(\mu-1)\}$, $\rho_{\mu+\beta}=\gamma/(\mu+\beta-1)$, $p=(\mu +\gamma-1)/(1+\gamma-\alpha-\beta)$ and $q=p/(p-1)$. Then there exists a unique stationary solution $Q=(Q_{k})_{k\in\N}$ of~\eqref{eq:Smol:1} and $Q\in \ell_{\nu}^{1}$ for all $\nu\geq 0$.
\end{theorem}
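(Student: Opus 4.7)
The plan is to rephrase the stationary version of~\eqref{eq:Smol:1} as the fixed-point equation $Q=\Phi(Q)$ with
\begin{equation*}
  \Phi(c)_{k} \;:=\; \frac{s_{k}+\tfrac{1}{2}\sum_{\ell=1}^{k-1}a_{k-\ell,\ell}c_{k-\ell}c_{\ell}}{r_{k}+\sum_{\ell=1}^{\infty}a_{k,\ell}c_{\ell}},
\end{equation*}
and then to use a priori moment estimates for $Q$ to identify a closed, $\Phi$-invariant set $\mathcal{B}\subset\ell^{1}_{1}$ on which a truncation/Schauder argument yields a fixed point; the smallness hypothesis is what makes the moment estimates close and, via an analogous manipulation on the difference of two solutions, delivers uniqueness.

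The a priori moment estimates are the heart of the matter. Testing the stationary equation against $k^{\mu}$, using the standard symmetrisation and the elementary bound $(k+\ell)^{\mu}-k^{\mu}-\ell^{\mu}\leq C_{\mu}(k^{\mu-1}\ell+k\ell^{\mu-1})$ for $\mu\geq 1$, and invoking \eqref{eq:Ass:a} and \eqref{eq:Ass:r} yields
\begin{equation*}
  R_{*}\,\mom_{\mu+\gamma}(Q) \;\leq\; 2C_{\mu}A_{*}\bigl(\mom_{\mu+\alpha-1}(Q)\mom_{1+\beta}(Q)+\mom_{\mu+\beta-1}(Q)\mom_{1+\alpha}(Q)\bigr)\;+\;\s_{\mu}.
\end{equation*}
The case $\mu=1$ is special because $(k+\ell)-k-\ell=0$ kills the gain: it gives $R_{*}\mom_{1+\gamma}(Q)\leq\s_{1}$, whence $\mom_{1}(Q)\leq\mom_{1+\gamma}(Q)\leq\widehat{\s}_{1}$. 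Specialising to exponent $\mu+\beta$ and applying Young's inequality with the dual pair $(p,q)$ from the statement to the resulting bilinear gain (splitting it into $\theta\,\mom_{\mu+\beta+\gamma}$ absorbed by the dissipation plus a residue controlled by $\widehat{A}_{*}$ and $\widehat{\s}_{1}$) produces, under the first smallness condition, a uniform bound on $\mom_{\mu+\beta}(Q)$. The second alternative arises by first Hölder-interpolating $\mom_{\mu+\beta}\leq\mom_{1}^{\rho_{\mu+\beta}/(1+\rho_{\mu+\beta})}\mom_{\mu+\beta+\gamma}^{1/(1+\rho_{\mu+\beta})}$ with $\rho_{\mu+\beta}=\gamma/(\mu+\beta-1)$ and only then invoking Young, which accounts for the exponent $1/(1+\rho_{\mu+\beta})$; iterating with larger weights yields $\mom_{\nu}(Q)\leq M_{\nu}$ for every $\nu\geq 0$.

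For existence, I would take $\mathcal{B}=\{c\geq 0\,:\,\mom_{1+\gamma}(c)\leq\widehat{\s}_{1},\,\mom_{\mu+\beta}(c)\leq M\}$, which the moment estimates render $\Phi$-invariant, and truncate~\eqref{eq:Smol:1} at cluster size $N$. The truncated stationary problem is finite-dimensional and admits a fixed point by Brouwer; the truncation-uniform moment bounds provide tightness, and a diagonal subsequence $N\to\infty$ delivers $Q\in\bigcap_{\nu\geq 0}\ell^{1}_{\nu}$ solving the full stationary equation.

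For uniqueness, given two stationary solutions $Q^{(1)},Q^{(2)}$ I set $D_{k}=Q^{(1)}_{k}-Q^{(2)}_{k}$, subtract the two equations, and use the bilinearisation $Q^{(1)}_{a}Q^{(1)}_{b}-Q^{(2)}_{a}Q^{(2)}_{b}=D_{a}Q^{(1)}_{b}+Q^{(2)}_{a}D_{b}$ to obtain a linear equation in $D$. Multiplying by $\sgn(D_{k})k^{\mu}$, summing, and applying the convolution identity to $|D|$ with moments of $Q^{(j)}$ controlled by the bounds above, together with the same $(p,q)$-Young split, leads to $R_{*}\|D\|_{\ell^{1}_{\mu+\gamma}}\leq\kappa\|D\|_{\ell^{1}_{\mu+\gamma}}$ with $\kappa$ equal to the left-hand side of the smallness hypothesis; hence $D\equiv 0$. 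The main obstacle is the bookkeeping behind the explicit constants: $16C_{\mu}A_{*}$ picks up a factor of $4$ from the $\alpha\leftrightarrow\beta$ symmetrisation and an additional $4$ from the bilinearisation of $D$, while $(2^{\mu+\beta}(\mu+\beta))^{p}q^{1-p}/p$ is the precise Young constant. The hypothesis $\mu+\beta>\max\{2-\alpha-\beta,1\}$ is exactly what guarantees $p>1$ so that Young's inequality is admissible.
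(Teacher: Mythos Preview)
Your route is genuinely different from the paper's. The paper does not set up a fixed-point map at all: it takes the solution $c$ of the \emph{evolution} equation~\eqref{eq:Smol:1} with initial datum $c^{\text{in}}\equiv 0$, uses the uniform moment bounds of Lemmas~\ref{Lem:total:mass:1} and~\ref{Lem:moment:general} to extract a subsequential limit $c(t_n)\to Q$ in every $\ell^1_\mu$, and then invokes Lemma~\ref{Lem:zero:derivative} (which in turn rests on the functional inequality of Lemma~\ref{Lem:functional:inequality:2}) to see that $\tfrac{\dd}{\dt}c_k(t_n)\to 0$; Proposition~\ref{Prop:convergence:operator} then passes the limit in the right-hand side, so $Q$ satisfies the stationary equation. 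Uniqueness is read off directly from~\eqref{eq:functional:inequality}. The advantage of this dynamical argument is that it recycles the well-posedness theory and the contraction estimate already in place; no separate fixed-point theorem is needed. Your uniqueness sketch, by contrast, is essentially the stationary specialisation of Lemma~\ref{Lem:functional:inequality:1:b} and is fine.

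There is, however, a real gap in your existence argument. The moment identities you write down are obtained by testing the \emph{stationary equation}, i.e.\ they are a~priori bounds for sequences satisfying $Q=\Phi(Q)$; they do \emph{not} show that $\Phi$ maps $\mathcal{B}$ into itself. Concretely, if $d=\Phi(c)$ and you test with $k$, the loss term is $\sum_{k,\ell} k\,a_{k,\ell}\,d_k\,c_\ell$ while the gain term symmetrises to $\sum_{k,\ell} k\,a_{k,\ell}\,c_k\,c_\ell$; these cancel only when $d=c$, so the bound $\mom_{1+\gamma}(\Phi(c))\leq\widehat{\s}_1$ does not follow, and Brouwer cannot be applied as stated. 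The standard repair is to swap Brouwer for the Leray--Schauder alternative: for $c=\lambda\Phi(c)$ with $\lambda\in[0,1]$ the same test gives $\sum_k k r_k c_k=\lambda\s_1+(\lambda-1)\sum_{k,\ell}k\,a_{k,\ell}c_k c_\ell\leq \s_1$, and the higher-moment bounds close analogously, yielding a truncated fixed point with $N$-uniform moments and hence the limit $Q$. Alternatively one can exploit the dissipativity of the truncated flow (cf.\ the Remark after Lemma~\ref{Lem:moment:general:large:time}) to locate a fixed point in its attractor---but that is effectively the paper's argument carried out at finite $N$.
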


The assumptions in Theorem~\eqref{Thm:existence:stat:sol} essentially amount to a sufficient balance of injection versus removal of clusters. Moreover, we also obtain that each solution to~\eqref{eq:Smol:1} converges to the unique equilibrium.

\begin{theorem}\label{Thm:convergence}
 Let the assumptions of Theorem~\ref{Thm:existence:stat:sol} be satisfied. Then, for each solution $c=(c_{k})_{k\in\N}$ of~\eqref{eq:Smol:1} there exists $T_{c}>0$ such that
 \begin{equation*}
  \norm{c(t)-Q}_{\ell_{\mu}^{1}}\leq K\ee^{-\kappa (t-T_{c})} \quad \text{for all }t\geq T_{c}
 \end{equation*}
 where the constants $K,\kappa>0$ are independent of $c$.
\end{theorem}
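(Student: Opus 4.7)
The plan is to adapt the functional-inequality strategy of Fournier--Mischler~\cite{FoM04}, originally devised for the coagulation-fragmentation equation~\eqref{eq:coag:frag} without detailed balance, to the present forced setting. Let $Q$ denote the unique equilibrium provided by Theorem~\ref{Thm:existence:stat:sol} and set $d_{k}(t)\vcc=c_{k}(t)-Q_{k}$. The argument splits into two main steps: (i) show that every moment of $c(t)$ enters a universal bounded set after a finite waiting time $T_{c}$, and (ii) derive a differential inequality $\phi'(t)\leq -\kappa\,\phi(t)$ for $\phi(t)\vcc=\sum_{k=1}^{\infty}k^{\mu}|d_{k}(t)|$.

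For step~(i), the strong removal~\eqref{eq:Ass:r} combined with the fast-decaying source~\eqref{eq:Ass:s} yields, through the same type of moment estimates that already feature in the proofs of Theorems~\ref{Thm:existence:evolution} and~\ref{Thm:existence:stat:sol}, a time $T_{c}>0$ (depending on the initial datum) and constants $M_{\nu}>0$ (\emph{independent} of $c$) such that $\norm{c(t)}_{\ell_{\nu}^{1}}\leq M_{\nu}$ for every $t\geq T_{c}$ and every $\nu\geq 0$. The mechanism is that for $\nu$ sufficiently large the dissipation $-\sum k^{\nu}r_{k}c_{k}\leq -R_{*}\sum k^{\nu+\gamma}c_{k}$ dominates the coagulation gain, by virtue of $\gamma>\alpha+\beta-1$, and forces $\mom_{\nu}(c(t))$ into an absorbing set.

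For step~(ii), subtract the stationary equation satisfied by $Q_{k}$ from~\eqref{eq:Smol:1}, use the algebraic identity
\begin{equation*}
c_{k}c_{\ell}-Q_{k}Q_{\ell}=\tfrac{1}{2}(c_{k}+Q_{k})d_{\ell}+\tfrac{1}{2}(c_{\ell}+Q_{\ell})d_{k},
\end{equation*}
multiply the resulting equation for $d_{k}$ by $k^{\mu}\sgn(d_{k})$, sum over $k$ and reindex the coagulation gain via $m=k-\ell$. Using the symmetry of $a_{k,\ell}$ one arrives at
\begin{equation*}
\frac{\dd}{\dt}\phi(t)\leq \frac{1}{2}\sum_{k,\ell=1}^{\infty}\bigl[(k+\ell)^{\mu}-k^{\mu}+\ell^{\mu}\bigr]a_{k,\ell}(c_{\ell}+Q_{\ell})|d_{k}|-\sum_{k=1}^{\infty}k^{\mu}r_{k}|d_{k}|.
\end{equation*}
The elementary bound $(k+\ell)^{\mu}-k^{\mu}\leq C_{\mu}(k^{\mu-1}\ell+\ell^{\mu})$ (whence the constant $C_{\mu}$ from Theorem~\ref{Thm:existence:stat:sol}), together with~\eqref{eq:Ass:a} and the uniform moment bounds on $c+Q$ from step~(i) and Theorem~\ref{Thm:existence:stat:sol}, reduces the bilinear integral to a combination of weighted moments of $|d|$ whose prefactors are precisely the quantities appearing in the smallness hypothesis of Theorem~\ref{Thm:existence:stat:sol}. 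Combining with $-R_{*}\sum k^{\mu+\gamma}|d_{k}|$ from the removal and using $\gamma>\max\{0,\alpha+\beta-1\}$ to interpolate, via H\"older's inequality, the intermediate weight $k^{\mu+\beta-1}|d_{k}|$ between $k^{\mu+\gamma}|d_{k}|$ and $k^{\mu}|d_{k}|$ (this is the origin of the exponents $p=(\mu+\gamma-1)/(1+\gamma-\alpha-\beta)$ and $\rho_{\mu+\beta}$ in Theorem~\ref{Thm:existence:stat:sol}), the smallness assumption delivers $\phi'(t)\leq -\kappa\,\phi(t)$ for $t\geq T_{c}$. Gronwall's lemma then finishes the proof, with $\kappa$ manifestly independent of $c$.

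The main obstacle is the closure of step~(ii): one must show that the coagulation-generated contributions are \emph{strictly} dominated by the removal, not merely comparable to it. This is exactly where the smallness hypothesis of Theorem~\ref{Thm:existence:stat:sol} plays its decisive role, and the intricate form of the constants there precisely reflects the H\"older interpolation that becomes necessary when $\beta>\gamma$. A secondary technical point is that the rate $\kappa$ has to be independent of the particular solution $c$; this is achieved by confining all $c$-dependence to the waiting time $T_{c}$, after which the moments of $c(t)$ lie in the universal absorbing set from step~(i) and hence the coefficients of the differential inequality are universal.
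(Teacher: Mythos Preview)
Your overall strategy is the same as the paper's and is correct: derive the differential inequality of step~(ii) for $\phi(t)=\sum_k k^\mu|c_k-Q_k|$ (this is Lemma~\ref{Lem:functional:inequality:1:b} in the paper), then feed in the universal moment bounds of step~(i) (Lemma~\ref{Lem:moment:general:large:time}) so that the bracket becomes $\leq -\kappa$ after a waiting time (Lemma~\ref{Lem:functional:inequality:2}), and conclude by Gr\"onwall.

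There is, however, one point where your sketch locates the technical work in the wrong place. After the bound $(k+\ell)^{\mu}-k^{\mu}+\ell^{\mu}\leq C_\mu\,\ell^{\max\{1,\mu-1\}}k^{\mu-1}+2\ell^\mu$ and~\eqref{eq:Ass:a}, every $k$-exponent that appears is $\leq \mu$ (since $\alpha\leq\beta\leq 1$), so one may bound crudely by $k^\mu$; likewise the removal needs only $r_k\geq R_*$, not $r_k\geq R_* k^\gamma$. No H\"older interpolation between weighted moments of $|d_k|$ is required, and in particular this is \emph{not} where the exponents $p=(\mu+\gamma-1)/(1+\gamma-\alpha-\beta)$ and $\rho_{\mu+\beta}$ originate. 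Those exponents arise entirely in step~(i), in proving the universal bound on $\mom_{\mu+\beta}(c(t))$: there one interpolates $\mom_{\alpha+\beta+\mu-1}$ between $\mom_1$ and $\mom_{\mu+\gamma}$ (this gives $p,q$) and $\mom_\mu$ between $\mom_1$ and $\mom_{\mu+\gamma}$ (this gives $\rho_\mu$); see~\eqref{eq:moment:Hoelder} and~\eqref{eq:moment:proof:1}. In the functional inequality itself the coefficient is simply $2A_*(C_\mu+2)\,\mom_{\mu+\beta}(c+Q)-R_*$, and the smallness hypothesis of Theorem~\ref{Thm:existence:stat:sol} is precisely what makes this negative once $\mom_{\mu+\beta}(c(t))$ has entered its absorbing set. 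Your argument closes correctly once you relocate the interpolation to step~(i).
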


\begin{remark}\label{Rem:equilibria}
 Note that the existence of stationary solutions to~\eqref{eq:Smol:1} for the case $\beta=\alpha+\gamma$ and $\alpha<1$ has already been proven in~\cite{CrS82} without the smallness condition which we assume in Theorem~\ref{Thm:existence:stat:sol}. However, in~\cite{CrS82} no uniqueness is established and, to our knowledge, convergence of solutions to~\eqref{eq:Smol:1} to these equilibria has not yet been shown.
\end{remark}

\subsection{Outline of the article}

The remainder of this work is organised as follows. In Section~\ref{Sec:moment:estimates} we establish several estimates on moments of solutions to~\eqref{eq:Smol:1}. Based on these moment bounds, we will then give the proof of Theorem~\ref{Thm:existence:evolution} in Section~\ref{Sec:existence}. Following the approach of~\cite{FoM04}, in Section~\ref{Sec:func:inequality} we show an important functional inequality. This inequality be key for the proofs of \cref{Thm:existence:stat:sol,Thm:convergence}, which are then contained in Section~\ref{Sec:convergence}. Finally, in Section~\ref{Sec:example} we provide an example, which illustrates that we cannot expect that our proof to show convergence to equilibrium can be extended to coefficients without a smallness condition as in Theorem~\ref{Thm:existence:stat:sol}.

\section{Moment estimates}\label{Sec:moment:estimates}

In this section we provide several a priori estimates for the moments of solutions to~\eqref{eq:Smol:1}. To simplify the presentation, we use for $\mu\in[0,\infty)$ and a fixed solution $c=(c_{k})_{k\in\N}$ of~\eqref{eq:Smol:1} with initial condition $c^{\text{in}}=(c_{k}^{\text{in}})_{k\in\N}$ the notation
\begin{equation*}
 \mom_{\mu}(t)=\sum_{k=1}^{\infty}k^{\mu}c_{k}(t),\quad \mom_{\mu}^{N}(t)=\sum_{k=1}^{N}k^{\mu}c_{k}(t)\quad\text{and}\quad \mom_{1}^{\text{in}}=\sum_{k=1}^{\infty}kc_{k}^{\text{in}}.
\end{equation*}
while we also note that $\mom_{1}^{\text{in}}=\mom_{1}(0)$. Note that $\mom_{\mu}^{N}$ exists for each $\mu\in[0,\infty)$ since in this case the sum is finite.

\begin{remark}\label{Rem:moments:monotonicity}
Since we are dealing with the discrete coagulation equation, we note that the non-negativity of the solutions immediately yields that
 \begin{equation*}
  \mom_{\mu_{1}}\leq \mom_{\mu_{2}},\quad\text{and}\quad \mom_{\mu_{1}}^{N}\leq \mom_{\mu_{2}}^{N}
 \end{equation*}
 if $0\leq \mu_{1}\leq \mu_{2}<\infty$.
\end{remark}

To get bounds on the moments we note the following well-known relation for the coagulation operator. For a sequence $\varphi=(\varphi_{k})_{k\in\N}$ with at most polynomial growth and each solution $c=(c_{k})_{k\in\N}$ of~\eqref{eq:Smol:1} we have due to the symmetry of $a_{k,\ell}$ that
\begin{equation}\label{eq:weak:coag:op}
 \sum_{k=1}^{\infty}\varphi_{k}\biggl(\frac{1}{2}\sum_{\ell=1}^{k-1}a_{k-\ell,\ell}c_{k-\ell,\ell}c_{\ell}-c_{k}\sum_{\ell=1}^{\infty}a_{k,\ell}c_{\ell}\biggr)=\frac{1}{2}\sum_{k=1}^{\infty}\sum_{\ell=1}^{\infty}a_{k,\ell}c_{k}c_{\ell}(\varphi_{k+\ell}-\varphi_{k}-\varphi_{\ell}).
\end{equation}
Thus, multiplying~\eqref{eq:Smol:1} with $\varphi_{k}$ and summing over $k\in\N$ we find
\begin{equation}\label{eq:Smol:weak}
 \frac{\dd}{\dd{t}}\sum_{k=1}^{\infty}c_{k}\varphi_{k}=\frac{1}{2}\sum_{k=1}^{\infty}\sum_{\ell=1}^{\infty}a_{k,\ell}c_{k}c_{\ell}[\varphi_{k+\ell}-\varphi_{k}-\varphi_{\ell}]+\sum_{k=1}^{\infty}s_{k}\varphi_{k}-\sum_{k=1}^{\infty}r_{k}c_{k}\varphi_{k}.
\end{equation}
Let $\chi_\mathcal{S}$ denote the indicator function for a set $\mathcal{S}$. If we choose $\varphi_{k}=k\chi_{\{k\leq N\}}$ in~\eqref{eq:Smol:weak}, we obtain together with $s_{k}\geq 0$ that
\begin{equation}\label{eq:first:moment}
 \frac{\dd}{\dt}\mom_{1}^{N}(t)=\sum_{k=1}^{N}ks_{k}-\sum_{k=1}^{N}kr_{k}c_{k}\leq \s_{1}-\sum_{k=1}^{N}kr_{k}c_{k}.
\end{equation}
Due to assumption~\eqref{eq:Ass:r} we have $r_{k}\geq R_{*}$ for all $k\in\N$ which yields
\begin{equation}\label{eq:first:moment:inequality}
 \frac{\dd}{\dt}\mom_{1}^{N}(t)\leq \s_{1}-R_{*}\mom_{1}^{N}.
\end{equation}
From this inequality we already conclude the following statement.

\begin{lemma}\label{Lem:total:mass:1}
 Let $c=(c_{k})_{k\in\N}$ be a solution to~\eqref{eq:Smol:1} with initial condition $c^{\text{in}}$. Then the corresponding first moment $\mom_{1}$ is uniformly bounded. More precisely, we have
 \begin{equation*}
  \mom_{1}(t)=\sum_{k=1}^{\infty}kc_{k}(t)\leq \max\{\mom_{1}^{\text{in}},\widehat{\s}_{1}\} \quad \text{for all }t\geq 0.
 \end{equation*}
 Moreover, there exists a time $T>0$ which depends only on $\mom_{1}^{\text{in}}$, $\widehat{\s}_{1}$ and $R_{*}$ such that $\mom_{1}(t)\leq 2\widehat{\s}_{1}$ for all $t\geq T$. 
\end{lemma}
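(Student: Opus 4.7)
The starting point is the differential inequality
\begin{equation*}
 \frac{\dd}{\dt}\mom_{1}^{N}(t)\leq \s_{1}-R_{*}\mom_{1}^{N}(t),
\end{equation*}
which has already been derived in~\eqref{eq:first:moment:inequality}. My plan is to apply a Gronwall-type comparison to the $N$-truncated moments, pass to the limit $N\to\infty$ by monotone convergence, and then deduce the stated large-time bound from the explicit form of the Gronwall estimate.

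First I would observe that, by standard ODE comparison applied to the (classically differentiable) function $\mom_{1}^{N}(t)$, the above inequality yields
\begin{equation*}
 \mom_{1}^{N}(t)\leq \mom_{1}^{N}(0)\ee^{-R_{*}t}+\widehat{\s}_{1}\bigl(1-\ee^{-R_{*}t}\bigr)\leq \max\{\mom_{1}^{N}(0),\widehat{\s}_{1}\}\leq \max\{\mom_{1}^{\text{in}},\widehat{\s}_{1}\},
\end{equation*}
where I used $\widehat{\s}_{1}=\s_{1}/R_{*}$ together with $\mom_{1}^{N}(0)\leq \mom_{1}^{\text{in}}$. Since $c_{k}(t)\geq 0$, the sequence $(\mom_{1}^{N}(t))_{N\in\N}$ is non-decreasing in $N$ for each fixed $t$, so monotone convergence gives $\mom_{1}^{N}(t)\nearrow \mom_{1}(t)$ as $N\to\infty$. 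Taking the supremum over $N$ on the left-hand side then yields the asserted uniform bound
\begin{equation*}
 \mom_{1}(t)\leq \max\{\mom_{1}^{\text{in}},\widehat{\s}_{1}\}\quad\text{for all } t\geq 0.
\end{equation*}

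For the second claim I would use the explicit form of the Gronwall estimate. If $\mom_{1}^{\text{in}}\leq \widehat{\s}_{1}$, then already $\mom_{1}(t)\leq \widehat{\s}_{1}\leq 2\widehat{\s}_{1}$ for all $t\geq 0$ and any $T>0$ works. Otherwise, the bound $\mom_{1}^{N}(t)\leq \widehat{\s}_{1}+(\mom_{1}^{\text{in}}-\widehat{\s}_{1})\ee^{-R_{*}t}$ holds uniformly in $N$, and after passing to the limit the right-hand side is at most $2\widehat{\s}_{1}$ as soon as
\begin{equation*}
 T\geq \frac{1}{R_{*}}\log\Bigl(\frac{\mom_{1}^{\text{in}}-\widehat{\s}_{1}}{\widehat{\s}_{1}}\Bigr),
\end{equation*}
which depends only on $\mom_{1}^{\text{in}}$, $\widehat{\s}_{1}$ and $R_{*}$ as required.

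The only point that requires any care is the justification of the comparison argument at the level of the truncated moments: since $\mom_{1}^{N}$ involves only finitely many $c_{k}$, each of which is $C^{1}$ in $t$ by the solution concept in Definition~\ref{Def:solution}, $\mom_{1}^{N}$ is itself $C^{1}$ on $(0,T)$ and continuous at $0$, so the differential inequality is rigorously valid and standard scalar comparison applies. After that, everything reduces to the monotone passage $N\to\infty$, which is immediate from non-negativity of $c_{k}$, so I do not expect any genuine obstacle.
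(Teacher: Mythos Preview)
Your proof is correct and follows essentially the same approach as the paper: apply Gr\"onwall to the truncated inequality~\eqref{eq:first:moment:inequality}, deduce the uniform bound $\mom_{1}^{N}(t)\leq\max\{\mom_{1}^{\text{in}},\widehat{\s}_{1}\}$, pass to the limit $N\to\infty$, and then read off the large-time bound from the explicit Gr\"onwall estimate. Your case distinction and explicit threshold for $T$ are in fact slightly cleaner than the paper's version, and your remark on the $C^{1}$ regularity of $\mom_{1}^{N}$ is a welcome justification that the paper leaves implicit.
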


\begin{proof}
 For fixed $N\in\N$ we apply Grönwall's inequality to~\eqref{eq:first:moment:inequality} which yields
 \begin{equation}\label{eq:total:mass:1}
  \mom_{1}^{N}(t)\leq \Bigl(\mom_{1}^{N}(0)-\frac{\s_{1}}{R_{*}}\Bigr)\ee^{-R_{*}t}+\frac{\s_{1}}{R_{*}}=(\mom_{1}^{N}(0)-\widehat{\s}_{1})\ee^{-R_{*}t}+\widehat{\s}_{1}\quad \text{for all }t\geq 0.
 \end{equation}
From this we easily deduce that $\mom_{1}^{N}(t)\leq \max\{\mom_{1}^{N}(0),\widehat{\s}_{1}\}$ and using that $\mom_{1}^{N}(0)\leq \mom_{1}^{\text{in}}$ the first claim follows upon taking the limit $N\to\infty$. To prove the second claim, we note that for $\mom_{1}^{\text{in}}=0$ there is nothing to show. On the other hand, if $\mom_{1}^{\text{in}}>0$ we conclude immediately from~\eqref{eq:total:mass:1} that it suffices to take $T>\max\{0, \log(\widehat{\s}_{1}/\mom_{1}^{\text{in}})/R_{*}\}$.
\end{proof}

Based on the bound on $\mom_{1}^{N}$ we can now derive also uniform estimates on higher order moments. In fact, we take $\varphi_{k}=k^{\mu}\chi_{\{k\leq N\}}$ with $\mu>\max\{2-\alpha-\beta,1\}$ in~\eqref{eq:Smol:weak} and note that
\begin{equation*}
 (k+\ell)^{\mu}\chi_{\{k+\ell\leq N\}}-k^{\mu}\chi_{\{k\leq N\}}-\ell^{\mu}\chi_{\{\ell\leq N\}}\leq \bigl((k+\ell)^{\mu}-k^{\mu}-\ell^{\mu}\bigr)\chi_{\{k\leq N\}\cup \{\ell\leq N\}}.
\end{equation*}
 Thus, we conclude from~\eqref{eq:Smol:weak} together with~\cref{eq:Ass:a,eq:Ass:r,eq:Ass:s} that
\begin{equation}\label{eq:higher:moment:1}
 \begin{aligned}
   \frac{\dd}{\dt}\mom_{\mu}^{N}(t)&\leq\frac{1}{2}\sum_{k=1}^{N}\sum_{\ell=1}^{N}a_{k,\ell}c_{k}(t)c_{\ell}(t)\bigl[(k+\ell)^{\mu}-k^{\mu}-\ell^{\mu}\bigr]+\sum_{k=1}^{N}k^{\mu}s_{k}-\sum_{k=1}^{N}k^{\mu}r_{k}c_{k}(t)\\
   &\leq \frac{A_{*}}{2}\sum_{k,\ell=1}^{N}c_{k}(t)c_{\ell}(t)\bigl(k^{\alpha}\ell^{\beta}+k^{\beta}\ell^{\alpha}\bigr)\bigl[(k+\ell)^{\mu}-k^{\mu}-\ell^{\mu}\bigr]+\s_{\mu}-R_{*}\mom_{\mu+\gamma}^{N}.
 \end{aligned}
\end{equation}
In order to continue, we estimate the expression $\bigl(k^{\alpha}\ell^{\beta}+k^{\beta}\ell^{\alpha}\bigr)\bigl[(k+\ell)^{\mu}-k^{\mu}-\ell^{\mu}\bigr]$ following the same approach as in~\cite{EMR05}. If $\ell\leq k$ we define $z=\ell/k\leq 1$ and find
\begin{multline*}
 \bigl(k^{\alpha}\ell^{\beta}+k^{\beta}\ell^{\alpha}\bigr)\bigl[(k+\ell)^{\mu}-k^{\mu}-\ell^{\mu}\bigr]=k^{\alpha+\beta+\mu}\bigl(z^{\beta}+z^{\alpha}\bigr)\bigl[(1+z)^{\mu}-1-z^{\mu}\bigr]\\*
 \leq 2k^{\alpha+\beta+\mu}z^{\alpha}\bigl[(1+z)^{\mu}-1-z^{\mu}\bigr]\leq 2^{\mu}\mu k^{\alpha+\beta+\mu}z^{\alpha+\nu}\leq 2^{\mu}\mu k^{\beta+\mu-\nu}\ell^{\alpha+\nu}
\end{multline*}
for each $\nu\in[0,1]$ while we also used that $\alpha\leq \beta$. If $k\leq\ell$ we can argue in the same way due to symmetry if we interchange $k$ and $\ell$ which finally yields
\begin{equation*}
 \bigl(k^{\alpha}\ell^{\beta}+k^{\beta}\ell^{\alpha}\bigr)\bigl[(k+\ell)^{\mu}-k^{\mu}-\ell^{\mu}\bigr]\leq 2^{\mu}\mu\bigl[k^{\beta+\mu-\nu}\ell^{\alpha+\nu}+k^{\alpha+\nu}\ell^{\beta+\mu-\nu}\bigr].
\end{equation*}
If we use this estimate in~\eqref{eq:higher:moment:1} it follows due to symmetry that
\begin{equation*}
 \frac{\dd}{\dt}\mom_{\mu}^{N}(t)\leq (2^{\mu}\mu A_{*})\mom_{\beta+\mu-\nu}^{N}\mom_{\alpha+\nu}^{N}+\s_{\mu}-R_{*}\mom_{\mu+\gamma}^{N}.
\end{equation*}
We choose now $\nu=1-\alpha$ which yields
\begin{equation}\label{eq:higher:moment:2}
 \frac{\dd}{\dt}\mom_{\mu}^{N}(t)\leq (2^{\mu}\mu A_{*})\mom_{\alpha+\beta+\mu-1}^{N}\mom_{1}^{N}+\s_{\mu}-R_{*}\mom_{\mu+\gamma}^{N}.
\end{equation}
From Hölder's inequality with 
\begin{equation}\label{eq:Hoelder:exp:1}
 p=\frac{\mu+\gamma-1}{1+\gamma-\alpha-\beta}\quad \text{and}\quad  q=\frac{p}{p-1} 
\end{equation}
we deduce
\begin{equation}\label{eq:moment:Hoelder}
 \mom_{\alpha+\beta+\mu-1}^{N}=\sum_{k=1}^{N}k^{\alpha+\beta+\mu-1}c_{k}=\sum_{k=1}^{N}k^{1/p}k^{\alpha+\beta+\mu-1-1/p}c_{k}^{1/p}c_{k}^{1/q}\leq(\mom_{1}^{N})^{\frac{1}{p}}\bigl(\mom_{\mu+\gamma}^{N}\bigr)^{\frac{1}{q}}.
\end{equation}
If we plug this estimate in~\eqref{eq:higher:moment:2} we further get
\begin{equation*}
 \frac{\dd}{\dt}\mom_{\mu}^{N}(t)+R_{*}\mom_{\mu+\gamma}^{N}\leq (2^{\mu}\mu A_{*})(\mom_{1}^{N})^{1+\frac{1}{p}}(\mom_{\mu+\gamma}^{N})^{\frac{1}{q}}+\s_{\mu}.
\end{equation*}
Young's inequality (with $\eps$) yields that 
\begin{equation*}
 (\mom_{1}^{N})^{1+\frac{1}{p}}(\mom_{\mu+\gamma}^{N})^{\frac{1}{q}}\leq \eps\mom_{\mu+\gamma}^{N}+\frac{(q\eps)^{1-p}}{p}(2^{\mu}\mu A_{*})^{p}(\mom_{1}^{N})^{1+p}.
\end{equation*}
 Thus, taking $\eps=R_{*}/2$ we find
\begin{equation}\label{eq:higher:moment:3}
 \frac{\dd}{\dt}\mom_{\mu}^{N}(t)+\frac{1}{2} R_{*}\mom_{\mu+\gamma}^{N}\leq \frac{(2^{\mu-1}\mu)^{p}q^{1-p}}{2p}R_{*}^{1-p}A_{*}^{p}(\mom_{1}^{N})^{1+p}+\s_{\mu}.
\end{equation}
In the same way as in~\eqref{eq:moment:Hoelder} Hölder's inequality with
\begin{equation}\label{eq:Hoelder:exp:2}
 \widetilde{p}=\frac{\mu+\gamma-1}{\gamma}\quad \text{and}\quad \widetilde{q}=\frac{\mu+\gamma-1}{\mu-1}
\end{equation}
yields
\begin{equation*}
 \mom_{\mu}^{N}=\sum_{k=1}^{N}k^{\mu}c_{k}^{N}=\sum_{k=1}^{N}k^{\mu-\frac{1}{\widetilde{p}}}k^{\frac{1}{\widetilde{p}}}c_{k}^{\frac{1}{\widetilde{p}}}c_{k}^{\frac{1}{\widetilde{q}}}\leq (\mom_{1}^{N})^{\frac{1}{\widetilde{p}}}(\mom_{\mu+\gamma}^{N})^{\frac{1}{\widetilde{q}}}.
\end{equation*}
Together with~\eqref{eq:Hoelder:exp:2} this can be rearranged as
\begin{equation}\label{eq:moment:proof:1}
 (\mom_{\mu}^{N})^{\frac{\mu+\gamma-1}{\mu-1}}\leq (\mom_{1}^{N})^{\frac{\gamma}{\mu-1}} \mom_{\mu+\gamma}^{N}.
\end{equation}
These estimates enable us to derive a couple of moment estimates that will be essential for the rest of this work. One ingredient for this is the following nonlinear version of Grönwall's inequality (see also~\cite{FoM04,ELM03}).

\begin{lemma}\label{Lem:nonlinear:gronwall}
 Let $\rho>1$ and let $f\in C([t_{0},\infty),\R_{\geq 0})\cap C^{1}((t_{0},\infty))$ satisfy
 \begin{equation}\label{eq:nonlinear:gronwall:1}
  \frac{\dd}{\dt}f(t)+\Lambda (f(t))^{1+\rho}\leq \Xi \quad \text{for all }t>t_{0}
 \end{equation}
 with constants $\Lambda, \Xi>0$. Then
 \begin{equation*}
  f(t)\leq \max\Bigl\{\Bigl(\frac{2\Xi}{\Lambda}\Bigr)^{\frac{1}{1+\rho}},\Bigl(\frac{2}{\rho\Lambda}\Bigr)^{\frac{1}{\rho}}(t-t_{0})^{-\frac{1}{\rho}}\Bigr\}
 \end{equation*}
 for all $t>t_{0}$.
\end{lemma}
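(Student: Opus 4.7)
The plan is a dichotomy argument comparing $f(t)$ to the equilibrium threshold $M \vcc= (2\Xi/\Lambda)^{1/(1+\rho)}$, which is precisely the value at which $\Lambda M^{1+\rho} = 2\Xi$, i.e.\ where the dissipation doubles the forcing. Fix $t>t_{0}$. If $f(t)\leq M$, the first term in the maximum already gives the bound, so nothing needs to be done. Hence we may assume $f(t)>M$.

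In this remaining case, I would exploit continuity of $f$ to define
\begin{equation*}
 t^{*}\vcc=\sup\bigl\{s\in[t_{0},t)\;:\;f(s)\leq M\bigr\},
\end{equation*}
with the convention $t^{*}=t_{0}$ if the set is empty. Then $f(s)>M$ on $(t^{*},t]$ and hence $\Lambda f(s)^{1+\rho}\geq 2\Xi$, so the hypothesis~\eqref{eq:nonlinear:gronwall:1} yields the purely dissipative inequality
\begin{equation*}
 \frac{\dd}{\ds}f(s)\leq \Xi-\Lambda f(s)^{1+\rho}\leq -\frac{\Lambda}{2}f(s)^{1+\rho}\quad \text{for all }s\in(t^{*},t].
\end{equation*}

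The last step is to integrate this Bernoulli-type differential inequality explicitly. Since $f(s)>M>0$ on $(t^{*},t]$, we may divide by $f(s)^{1+\rho}$ and rewrite the left-hand side as $-\tfrac{1}{\rho}\tfrac{\dd}{\ds}f(s)^{-\rho}$, obtaining $\tfrac{\dd}{\ds}f(s)^{-\rho}\geq \tfrac{\rho\Lambda}{2}$. Integrating from $t^{*}$ to $t$ and using $f(t^{*})^{-\rho}\geq 0$ together with $t^{*}\geq t_{0}$ gives
\begin{equation*}
 f(t)^{-\rho}\geq \frac{\rho\Lambda}{2}(t-t^{*})\geq \frac{\rho\Lambda}{2}(t-t_{0}),
\end{equation*}
which rearranges to the second bound in the claimed maximum.

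There is no serious obstacle here; this is a standard Bihari-type variant of Grönwall's inequality. The only point requiring a little care is handling the initial datum when $t^{*}=t_{0}$: a naive separation of variables would require $f(t_{0})>0$, but the above integration circumvents this since dropping the nonnegative term $f(t^{*})^{-\rho}$ is what makes the bound independent of the initial value and produces the universal constant in front of $(t-t_{0})^{-1/\rho}$.
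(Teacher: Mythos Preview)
Your overall strategy matches the paper's: split according to whether $f$ lies below the threshold $M=(2\Xi/\Lambda)^{1/(1+\rho)}$, and on the complementary set integrate the purely dissipative inequality $f'\leq -\tfrac{\Lambda}{2}f^{1+\rho}$. However, the final chain of inequalities contains a genuine error. From $t^{*}\geq t_{0}$ you conclude
\[
\frac{\rho\Lambda}{2}(t-t^{*})\;\geq\;\frac{\rho\Lambda}{2}(t-t_{0}),
\]
but $t^{*}\geq t_{0}$ gives $t-t^{*}\leq t-t_{0}$, so the inequality points the wrong way. Dropping $f(t^{*})^{-\rho}$ and then enlarging the interval to $[t_{0},t]$ does \emph{not} yield a lower bound on $f(t)^{-\rho}$.

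The repair is to show that in fact $t^{*}=t_{0}$ whenever $f(t)>M$. If $t^{*}>t_{0}$, continuity forces $f(t^{*})=M$ (it is $\leq M$ as a limit of points in the set, and $\geq M$ as a limit from the right where $f>M$). Your integrated inequality then gives $f(t)^{-\rho}\geq M^{-\rho}+\tfrac{\rho\Lambda}{2}(t-t^{*})>M^{-\rho}$, i.e.\ $f(t)<M$, contradicting the standing assumption. Hence the set $\{s\in[t_{0},t):f(s)\leq M\}$ is empty, $t^{*}=t_{0}$, and your bound follows with equality in the offending step. This is precisely the content of the paper's observation that the sublevel set $\{f\leq M\}$ is forward-invariant: the paper defines $T$ as the first hitting time of $M$ and notes that $f(t)\leq M$ for all $t\geq T$, so the dissipative integration is only ever needed on $(t_{0},T)$, where it runs from $t_{0}$ itself.
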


\begin{proof}
 This proof follows mainly arguments contained in~\cite{FoM04}. We define 
 \begin{equation*}
  T\vcc=\inf\Bigl\{t>t_{0}\;\Big|\; f(t)\leq \Bigl(\frac{2\Xi}{\Lambda}\Bigr)^{\frac{1}{1+\rho}}\Bigr\}.
 \end{equation*}
 The inequality~\eqref{eq:nonlinear:gronwall:1} directly implies that
 \begin{equation}\label{eq:nonlinear:Gronwall:2}
  f(t)\leq \Bigl(\frac{2\Xi}{\Lambda}\Bigr)^{\frac{1}{1+\rho}}\quad \text{for all }t\geq T.
 \end{equation}
 On the other hand, the definition of $T$ implies that
 \begin{equation*}
  \Xi\leq \frac{\Lambda}{2}\bigl(f(t)\bigr)^{1+\rho}\quad \text{if } t_{0}<t<T.
 \end{equation*}
 Thus, together with~\eqref{eq:nonlinear:gronwall:1} we deduce
 \begin{equation*}
  \frac{\dd}{\dt}f(t)+\frac{\Lambda}{2}\bigl(f(t)\bigr)^{1+\rho}\leq 0\quad \text{for }t\in(t_{0},T).
 \end{equation*}
Integrating this inequality explicitly, we obtain together with the non-negativity of $f(t_{0})$ that
\begin{equation*}
 f(t)\leq \biggl(\frac{1}{(f(t_{0}))^{-\rho}+\frac{\Lambda\rho}{2}(t-t_{0})}\biggr)^{\frac{1}{\rho}}\leq \Bigl(\frac{2}{\rho\Lambda}\Bigr)^{\frac{1}{\rho}}(t-t_{0})^{-\frac{1}{\rho}}\quad \text{for all }t\in(t_{0},T).
\end{equation*}
Together with~\eqref{eq:nonlinear:Gronwall:2} the claim immediately follows.
\end{proof}

Based on Lemma~\ref{Lem:nonlinear:gronwall} and~\eqref{eq:moment:proof:1} we can now prove the following estimate on higher order moments of solutions to~\eqref{eq:Smol:1}.

\begin{lemma}\label{Lem:moment:general}
 Let $c=(c_{k})_{k\in\N}$ be a solution to~\eqref{eq:Smol:1} with corresponding first moment $\mom_{1}$ and initial condition $c^{\text{in}}=(c_{k}^{\text{in}})_{k\in\N}$. For $\mu >\max\{2-\alpha-\beta,1\}$ let $\rho_{\mu }=\gamma/(\mu -1)$ and $p$, $q$ as in~\eqref{eq:Hoelder:exp:1}. Then 
 \begin{multline*}
   \mom_{\mu }(t)\leq \max\biggl\{\Bigl(\frac{2(2^{\mu -1}\mu )^{p}q^{1-p}}{p} \widehat{A}_{*}^{p}\max\{\mom_{1}^{\text{in}},\widehat{\s}_{1}\}^{1+p+\rho_{\mu }}+4\widehat{\s}_{\mu }\max\{\mom_{1}^{\text{in}},\widehat{\s}_{1}\}^{\rho_{\mu }}\Bigr)^{\frac{1}{1+\rho_{\mu }}},\\*
   \Bigl(\frac{4}{R_{*}\rho_{\mu }}\Bigr)^{\frac{1}{\rho_{\mu }}}\max\{\mom_{1}^{\text{in}},\widehat{s}_{1}\}t^{-\frac{1}{\rho_{\mu }}} \biggr\}
 \end{multline*}
for all $t\geq 0$. In particular, there exists a constant $C_{\mu }$ which only depends on the parameters $\alpha,\beta,\gamma,\mom_{1}^{\text{in}}, \widehat{s}_{1}, \widehat{\s}_{\mu }$, $\widehat{A}_{*}$ and $R_{*}$ such that $\mom_{\mu }(t)\leq C_{\mu }(1+t^{-1/\rho_{\mu }})$. 
\end{lemma}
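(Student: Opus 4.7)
The strategy is to combine the already-derived differential inequality~\eqref{eq:higher:moment:3} with the interpolation~\eqref{eq:moment:proof:1} to close it into a nonlinear ODE inequality in $\mom_{\mu}^{N}$ alone, and then invoke the nonlinear Grönwall estimate of Lemma~\ref{Lem:nonlinear:gronwall}. Concretely, Lemma~\ref{Lem:total:mass:1} gives the uniform bound $\mom_{1}^{N}(t)\leq M\vcc=\max\{\mom_{1}^{\text{in}},\widehat{\s}_{1}\}$ for every $N$ and $t$. Substituting this bound for $\mom_{1}^{N}$ on the right-hand side of~\eqref{eq:higher:moment:3}, and rearranging~\eqref{eq:moment:proof:1} as $\mom_{\mu+\gamma}^{N}\geq (\mom_{\mu}^{N})^{1+\rho_{\mu}}/M^{\rho_{\mu}}$ to control the dissipation term on the left-hand side, yields the closed inequality
\begin{equation*}
 \frac{\dd}{\dt}\mom_{\mu}^{N}(t)+\frac{R_{*}}{2M^{\rho_{\mu}}}\bigl(\mom_{\mu}^{N}(t)\bigr)^{1+\rho_{\mu}}\leq \Xi,\qquad \Xi\vcc=\frac{(2^{\mu-1}\mu)^{p}q^{1-p}}{2p}R_{*}^{1-p}A_{*}^{p}M^{1+p}+\s_{\mu}.
\end{equation*}

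Applying Lemma~\ref{Lem:nonlinear:gronwall} with $t_{0}=0$, $\Lambda\vcc=R_{*}/(2M^{\rho_{\mu}})$ and the above $\Xi$ directly delivers the claimed maximum-type bound for $\mom_{\mu}^{N}$. A routine algebraic simplification of $2\Xi/\Lambda$ and $2/(\rho_{\mu}\Lambda)$ in terms of the scaled coefficients $\widehat{A}_{*}$ and $\widehat{\s}_{\mu}$ from~\eqref{eq:reduce:coefficients} matches the precise prefactors stated in the lemma. Since the resulting estimate is independent of $N$ and $\mom_{\mu}^{N}(t)\nearrow\mom_{\mu}(t)$ as $N\to\infty$ by monotone convergence, passing to the limit $N\to\infty$ transfers the bound to $\mom_{\mu}$ itself.

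For the concluding ``in particular'' statement, the elementary inequality $\max\{a,b\}\leq a+b$ for $a,b\geq 0$ lets me collect the two summands into a single expression of the form $C_{\mu}(1+t^{-1/\rho_{\mu}})$, where $C_{\mu}$ depends only on $\alpha,\beta,\gamma,\mom_{1}^{\text{in}},\widehat{\s}_{1},\widehat{\s}_{\mu},\widehat{A}_{*}$ and $R_{*}$. The only subtle point worth checking is the applicability of Lemma~\ref{Lem:nonlinear:gronwall}, which requires $\rho_{\mu}=\gamma/(\mu-1)>0$; this is immediate from $\gamma>0$ in~\eqref{eq:Ass:r} and $\mu>1$ in the hypothesis. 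No genuine new obstacle arises, since the argument amounts to algebraic bookkeeping on top of previously established estimates combined with a direct invocation of Lemmas~\ref{Lem:total:mass:1} and~\ref{Lem:nonlinear:gronwall}.
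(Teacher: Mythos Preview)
Your proposal is correct and follows essentially the same route as the paper's proof: bound $\mom_{1}^{N}$ via Lemma~\ref{Lem:total:mass:1}, insert this into~\eqref{eq:higher:moment:3} and~\eqref{eq:moment:proof:1} to close the differential inequality in $\mom_{\mu}^{N}$, and then apply Lemma~\ref{Lem:nonlinear:gronwall} with exactly the same choice of $\Lambda$ and $\Xi$ (cf.~\eqref{eq:constants:1}). The only differences are cosmetic: you make the passage $N\to\infty$ and the derivation of the $C_{\mu}(1+t^{-1/\rho_{\mu}})$ form explicit, which the paper leaves implicit.
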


\begin{proof}
Due to Lemma~\ref{Lem:total:mass:1} we have $\mom_{1}^{N}(t)\leq \max\{\mom_{1}^{\text{in}},\widehat{\s}_{1}\}$ uniformly in $N\in\N$ for all $t\geq 0$. Thus, we can estimate the right-hand side of~\eqref{eq:moment:proof:1} to obtain 
 \begin{equation*}
 (\mom_{\mu }^{N})^{\frac{\mu +\gamma-1}{\mu -1}}\leq \max\{\mom_{1}^{\text{in}},\widehat{\s}_{1}\}^{\frac{\gamma}{\mu -1}} \mom_{\mu +\gamma}^{N}.
\end{equation*}
Since $\widehat{\s}_{1}>0$ by assumption, we have $\max\{\mom_{1}^{\text{in}},\widehat{\s}_{1}\}>0$ and we find together with~\eqref{eq:higher:moment:3} that
\begin{equation*}
 \frac{\dd}{\dt}\mom_{\mu }^{N}(t)+\frac{1}{2} R_{*}\max\{\mom_{1}^{\text{in}},\widehat{\s}_{1}\}^{-\frac{\gamma}{\mu -1}}(\mom_{\mu }^{N})^{\frac{\mu +\gamma-1}{\mu -1}}\leq \frac{(2^{\mu -1}\mu )^{p}q^{1-p}}{2p}R_{*}^{1-p}A_{*}^{p}(\mom_{1}^{N})^{1+p}+\s_{\mu }.
\end{equation*}
 Using again $\mom_{1}^{N}(t)\leq \max\{\mom_{1}^{\text{in}},\widehat{\s}_{1}\}$ to estimate the right-hand side we finally obtain
 \begin{multline}\label{eq:moment:diffineq:2}
  \frac{\dd}{\dt}\mom_{\mu }^{N}(t)+\frac{1}{2} R_{*}\max\{\mom_{1}^{\text{in}},\widehat{\s}_{1}\}^{-\frac{\gamma}{\mu -1}}(\mom_{\mu }^{N})^{\frac{\mu +\gamma-1}{\mu -1}}\\*
  \leq \frac{(2^{\mu -1}\mu )^{p}q^{1-p}}{2p}R_{*}^{1-p}A_{*}^{p}\max\{\mom_{1}^{\text{in}},\widehat{\s}_{1}\}^{1+p}+\s_{\mu }.
 \end{multline}
 Thus, applying Lemma~\ref{Lem:nonlinear:gronwall} with $t_{0}=0$ and 
 \begin{equation}\label{eq:constants:1}
   \Xi\vcc=\frac{(2^{\mu -1}\mu )^{p}q^{1-p}}{2p}R_{*}^{1-p}A_{*}^{p}\max\{\mom_{1}^{\text{in}},\widehat{\s}_{1}\}^{1+p}+\s_{\mu }\quad \text{and}\quad
   \Lambda\vcc=\frac{1}{2} R_{*}\max\{\mom_{1}^{\text{in}},\widehat{\s}_{1}\}^{-\frac{\gamma}{\mu -1}}
 \end{equation}
 the claim follows if we recall~\eqref{eq:reduce:coefficients}.
\end{proof}

The next lemma states that after a sufficiently large time, the higher moments can be estimated independently of the initial data.

\begin{lemma}\label{Lem:moment:general:large:time}
 Let $c=(c_{k})_{k\in\N}$ be a solution to~\eqref{eq:Smol:1} with corresponding first moment $\mom_{1}$. For $\mu >\max\{2-\alpha-\beta,1\}$ let $\rho_{\mu }=\gamma/(\mu -1)$ and $p$, $q$ as in~\eqref{eq:Hoelder:exp:1}. Then there exists $T>0$ such that
 \begin{align}
   \mom_{\mu }(t)&\leq 2\Bigl(\frac{2^{2+\rho_{\mu }}(2^{\mu }\mu )^{p}q^{1-p}}{p} \widehat{A}_{*}^{p}\widehat{\s}_{1}^{1+p+\rho_{\mu }}+2^{2+\rho_{\mu }}\widehat{\s}_{\mu }\widehat{\s}_{1}^{\rho_{\mu }}\Bigr)^{\frac{1}{1+\rho_{\mu }}} \label{eq:large:time:moment:1} \shortintertext{and}
   \mom_{\mu }(t)&\leq 4\Bigl(\frac{(2^{\mu }\mu )^{p}q^{1-p}}{p}\widehat{A}_{*}^{p}\widehat{\s}_{1}^{1+p}+\widehat{\s}_{\mu }\Bigr)\label{eq:large:time:moment:2}
 \end{align}
for all $t\geq T$.
\end{lemma}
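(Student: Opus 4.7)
The strategy is to re-run the moment estimate of Lemma~\ref{Lem:moment:general} from a late starting time $T_{0}$ at which $\mom_{1}$ has already entered its asymptotic regime, thereby eliminating the dependence on $\mom_{1}^{\text{in}}$. More precisely, Lemma~\ref{Lem:total:mass:1} provides some $T_{0}>0$ (depending on $\mom_{1}^{\text{in}}$, $\widehat{\s}_{1}$, $R_{*}$) such that $\mom_{1}(t)\leq 2\widehat{\s}_{1}$ for all $t\geq T_{0}$. In particular, $\mom_{1}^{N}(t)\leq 2\widehat{\s}_{1}$ uniformly in $N$ for $t\geq T_{0}$, which is exactly the bound that drove the constants in the previous lemma, with $\max\{\mom_{1}^{\text{in}},\widehat{\s}_{1}\}$ now replaced by $2\widehat{\s}_{1}$.

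For the bound~\eqref{eq:large:time:moment:1} I would repeat the argument leading to~\eqref{eq:moment:diffineq:2} verbatim but applied on the interval $[T_{0},\infty)$, so that the inequality becomes
\begin{equation*}
 \frac{\dd}{\dt}\mom_{\mu}^{N}(t)+\tfrac{1}{2}R_{*}(2\widehat{\s}_{1})^{-\rho_{\mu}}\bigl(\mom_{\mu}^{N}(t)\bigr)^{1+\rho_{\mu}}\leq \frac{(2^{\mu-1}\mu)^{p}q^{1-p}}{2p}R_{*}^{1-p}A_{*}^{p}(2\widehat{\s}_{1})^{1+p}+\s_{\mu}
\end{equation*}
for $t\geq T_{0}$. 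Then I would invoke Lemma~\ref{Lem:nonlinear:gronwall} with $t_{0}=T_{0}$ and the corresponding $\Xi,\Lambda$, and compute $(2\Xi/\Lambda)^{1/(1+\rho_{\mu})}$ explicitly; absorbing the factors of $2$ and using $2\cdot 2^{\mu-1}=2^{\mu}$ inside the $p$-th power yields precisely the expression on the right-hand side of~\eqref{eq:large:time:moment:1} (with the leading factor $2$ accounting for choosing $T\geq T_{0}$ so large that the transient term $(2/(\rho_{\mu}\Lambda))^{1/\rho_{\mu}}(t-T_{0})^{-1/\rho_{\mu}}$ is dominated by the stationary term). Finally I would pass to the limit $N\to\infty$ using monotone convergence.

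For the cleaner bound~\eqref{eq:large:time:moment:2} I would exploit the monotonicity from Remark~\ref{Rem:moments:monotonicity}, which gives $\mom_{\mu}^{N}\leq \mom_{\mu+\gamma}^{N}$ since $\gamma>0$. Applied to~\eqref{eq:higher:moment:3} with $\mom_{1}^{N}\leq 2\widehat{\s}_{1}$ for $t\geq T_{0}$ this produces the genuinely linear differential inequality
\begin{equation*}
 \frac{\dd}{\dt}\mom_{\mu}^{N}(t)+\tfrac{1}{2}R_{*}\mom_{\mu}^{N}(t)\leq \frac{(2^{\mu-1}\mu)^{p}q^{1-p}}{2p}R_{*}^{1-p}A_{*}^{p}(2\widehat{\s}_{1})^{1+p}+\s_{\mu},
\end{equation*}
to which ordinary (linear) Grönwall applies. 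Dividing the right-hand side by $R_{*}/2$, using $(2^{\mu-1}\mu)^{p}2^{1+p}=2(2^{\mu}\mu)^{p}$ and $\s_{\mu}/R_{*}=\widehat{\s}_{\mu}$, one obtains an asymptotic bound of $2\bigl(\tfrac{(2^{\mu}\mu)^{p}q^{1-p}}{p}\widehat{A}_{*}^{p}\widehat{\s}_{1}^{1+p}+\widehat{\s}_{\mu}\bigr)$; enlarging $T$ so that the exponentially decaying transient is dominated by the steady value and allowing a further factor of $2$ as a safety buffer yields~\eqref{eq:large:time:moment:2}. Passing to $N\to\infty$ concludes the proof.

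The main subtlety is not analytical but bookkeeping: one has to carry all the powers of $2$ through carefully (the factors $2^{2+\rho_{\mu}}$ in~\eqref{eq:large:time:moment:1} come from the $(2\widehat{\s}_{1})^{\rho_{\mu}}$ factor in $\Lambda^{-1}$ combined with the two $(2\widehat{\s}_{1})^{1+p}$ and $(2\widehat{\s}_{1})^{0}$ factors in $\Xi$, all raised to the power $1/(1+\rho_{\mu})$). Apart from this, the argument is a straightforward repetition of the proof of Lemma~\ref{Lem:moment:general} on the shifted time interval $[T_{0},\infty)$.
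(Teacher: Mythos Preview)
Your proposal is correct and follows essentially the same approach as the paper: restart the moment estimate at a time $T_{0}$ where $\mom_{1}\leq 2\widehat{\s}_{1}$, apply the nonlinear Grönwall lemma for~\eqref{eq:large:time:moment:1}, then use the monotonicity $\mom_{\mu}^{N}\leq \mom_{\mu+\gamma}^{N}$ together with linear Grönwall for~\eqref{eq:large:time:moment:2}, and finally pass to the limit $N\to\infty$. The only point the paper makes slightly more explicit is that the transient term in the linear Grönwall step is controlled by feeding in the already-established bound~\eqref{eq:large:time:moment:1} at the intermediate time $T_{2}$, which gives a uniform-in-$N$ bound on $\mom_{\mu}^{N}(T_{2})$; you should state this rather than just appeal to a ``safety buffer''.
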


\begin{remark}
 In the language of dynamical systems, the previous lemma in particular yields that the semi-group $S(t)$ associated to~\eqref{eq:Smol:1} is dissipative (see~\cite[Definition~10.2]{Rob01}) on each space $\ell_{\mu }^{1}$ if $\mu >\max\{2-\alpha-\beta,1\}$ and thus has a global attractor (see~\cite[Theorem~10.5]{Rob01}).
\end{remark}

\begin{proof}[Proof of Lemma~\ref{Lem:moment:general:large:time}]
 We can proceed similarly as in the proof of Lemma~\ref{Lem:moment:general}. More precisely, Lemma~\ref{Lem:total:mass:1} guarantees the existence of $T_{1}>0$ such that $\mom_{1}(t)\leq 2\widehat{\s}_{1}$ for all $t\geq T_{1}$. Thus, together with~\eqref{eq:moment:proof:1} we find
 \begin{equation*}
 (\mom_{\mu }^{N})^{\frac{\mu +\gamma-1}{\mu -1}}\leq (2\widehat{\s}_{1})^{\frac{\gamma}{\mu -1}} \mom_{\mu +\gamma}^{N}.
\end{equation*}
Since $\widehat{\s}_{1}>0$ and $\mom_{1}^{N}(t)\leq 2\widehat{\s}_{1}$ uniformly in $t$ and $N$, we thus obtain from~\eqref{eq:higher:moment:3} that
 \begin{equation}\label{eq:moment:general:large:1}
  \frac{\dd}{\dt}\mom_{\mu }^{N}(t)+2^{-1-\frac{\gamma}{\mu -1}} R_{*}\widehat{\s}_{1}^{-\frac{\gamma}{\mu -1}}(\mom_{\mu }^{N})^{\frac{\mu +\gamma-1}{\mu -1}}\leq \frac{(2^{\mu }\mu )^{p}q^{1-p}}{p}R_{*}^{1-p}A_{*}^{p}\widehat{\s}_{1}^{1+p}+\s_{\mu } \quad \text{for all }t\geq T_{1}.
 \end{equation}
 If we apply Lemma~\ref{Lem:nonlinear:gronwall} with $t_{0}=T_{1}$ as well as
 \begin{equation}
   \Xi\vcc=\frac{(2^{\mu }\mu )^{p}q^{1-p}}{p}R_{*}^{1-p}A_{*}^{p}\widehat{\s}_{1}^{1+p}+\s_{\mu }\quad \text{and}\quad
   \Lambda\vcc=2^{-1-\frac{\gamma}{\mu -1}} R_{*}\widehat{\s}_{1}^{-\frac{\gamma}{\mu -1}}
 \end{equation}
 we obtain
 \begin{equation*}
  \mom_{\mu }^{N}(t)\leq \max\biggl\{\Bigl(\frac{2^{2+\rho_{\mu }}(2^{\mu }\mu )^{p}q^{1-p}}{p} \widehat{A}_{*}^{p}\widehat{\s}_{1}^{1+p+\rho_{\mu }}+2^{2+\rho_{\mu }}\widehat{\s}_{\mu }\widehat{\s}_{1}^{\rho_{\mu }}\Bigr)^{\frac{1}{1+\rho_{\mu }}},2\Bigl(\frac{4}{R_{*}\rho_{\mu }}\Bigr)^{\frac{1}{\rho_{\mu }}}\widehat{s}_{1}(t-T_{1})^{-\frac{1}{\rho_{\mu }}} \biggr\}
 \end{equation*}
 for all $t>T_{1}$ taking also~\eqref{eq:reduce:coefficients} and $\rho_{\mu }=\gamma/(\mu -1)$ into account. Taking $T_{2}>T_{1}$ such that 
 \begin{equation*}
  2\Bigl(\frac{4}{R_{*}\rho_{\mu }}\Bigr)^{\frac{1}{\rho_{\mu }}}\widehat{s}_{1}(t-T_{1})^{-\frac{1}{\rho_{\mu }}}\leq \Bigl(\frac{2^{2+\rho_{\mu }}(2^{\mu -1}\mu )^{p}q^{1-p}}{p} \widehat{A}_{*}^{p}\widehat{\s}_{1}^{1+p+\rho_{\mu }}+2^{2+\rho_{\mu }}\widehat{\s}_{\mu }\widehat{\s}_{1}^{\rho_{\mu }}\Bigr)^{\frac{1}{1+\rho_{\mu }}}
 \end{equation*}
 for all $t>T_{2}$ and taking the limit $N\to\infty$ the estimate~\eqref{eq:large:time:moment:1} follows.
 
 Similarly, using $\mom_{1}(t)\leq 2\widehat{\s}_{1}$ for all $t>T_{1}$ and $\mom_{\mu }^{N}(t)\leq \mom_{\mu +\gamma}^{N}$ since $\gamma>0$, we obtain from~\eqref{eq:higher:moment:3} that
 \begin{equation*}
  \frac{\dd}{\dt}\mom_{\mu }^{N}(t)+\frac{R_{*}}{2}\mom_{\mu }^{N}(t)\leq \frac{(2^{\mu }\mu )^{p}q^{1-p}}{p}R_{*}\widehat{A}_{*}^{p}\widehat{\s}_{1}^{1+p}+\s_{\mu }.
 \end{equation*}
 Thus, Grönwall's inequality directly yields
 \begin{equation}\label{eq:proof:large:moments:1}
  \mom_{\mu }^{N}(t)\leq \mom_{\mu }^{N}(T_{2})\ee^{-\frac{R_{*}}{2}(t-T_{2})}+2\Bigl(\frac{(2^{\mu }\mu )^{p}q^{1-p}}{p}\widehat{A}_{*}^{p}\widehat{\s}_{1}^{1+p}+\widehat{\s}_{\mu }\Bigr)\bigl(1-\ee^{-\frac{R_{*}}{2}(t-T_{2})}\bigr).
 \end{equation}
 We have already shown that
 \begin{equation*}
  \mom_{\mu }^{N}(T_{2})\leq 2\Bigl(\frac{2^{2+\rho_{\mu }}(2^{\mu }\mu )^{p}q^{1-p}}{p} \widehat{A}_{*}^{p}\widehat{\s}_{1}^{1+p+\rho_{\mu }}+2^{2+\rho_{\mu }}\widehat{\s}_{\mu }\widehat{\s}_{1}^{\rho_{\mu }}\Bigr)^{\frac{1}{1+\rho_{\mu }}}.
 \end{equation*}
 Thus, to conclude the proof of~\eqref{eq:large:time:moment:2}, it suffices to choose $T>T_{2}$ sufficiently large such that 
 \begin{multline*}
  2\Bigl(\frac{2^{2+\rho_{\mu }}(2^{\mu }\mu )^{p}q^{1-p}}{p} \widehat{A}_{*}^{p}\widehat{\s}_{1}^{1+p+\rho_{\mu }}+2^{2+\rho_{\mu }}\widehat{\s}_{\mu }\widehat{\s}_{1}^{\rho_{\mu }}\Bigr)^{\frac{1}{1+\rho_{\mu }}}\ee^{-\frac{R_{*}}{2}(t-T_{2})}\\*
  \leq 2\Bigl(\frac{(2^{\mu }\mu )^{p}q^{1-p}}{p}\widehat{A}_{*}^{p}\widehat{\s}_{1}^{1+p}+\widehat{\s}_{\mu }\Bigr) \quad \text{for all }t\geq T
 \end{multline*}
 and to take the limit $N\to\infty$ in~\eqref{eq:proof:large:moments:1}.
\end{proof}

\section{Existence of a solution}\label{Sec:existence}

To prove existence of solutions to~\eqref{eq:Smol:1} we follow an approach for discrete coagulation (-fragmentation) equations which has been used in similar form in previous works e.g.\@~\cite{BaC90,Cos95a}.

More precisely, we consider first a finite dimensional approximation of~\eqref{eq:Smol:1} which, for $N\in\N$ fixed, reads
\begin{equation}\label{eq:Smol:finite}
 \begin{aligned}
  \frac{\dd}{\dt}c_{k}^{N}&=\frac{1}{2}\sum_{\ell=1}^{k-1}a_{k-\ell,\ell}c_{k-\ell}^{N}c_{\ell}^{N}-c_{k}^{N}\sum_{\ell=1}^{N-k}a_{k,\ell}c_{\ell}^{N}+s_{k}-r_{k}c_{k}^{N} &&\text{if }k\leq N,\\
  c_{k}^{N}&=0 && \text{else}.
 \end{aligned}
\end{equation}
We note that the sequence $c^{N}=(c_{k}^{N})_{k\in\N}$ is a solution to~\eqref{eq:Smol:1} where the coefficients $a_{k,\ell}$, $s_{k}$ and the initial condition $c^{\text{in}}=(c_{k}^{\text{in}})_{k\in\N}$ are replaced by
\begin{equation*}
 a_{k,\ell}^{N}=\begin{cases}
                 a_{k,\ell} & \text{if }k+\ell\leq N\\ 
                 0 &\text{else}
                \end{cases},
\quad
s_{k}^{N}=\begin{cases}
           s_{k} & \text{if  }k\leq N\\
           0 & \text{else}
          \end{cases}
\quad  \text{and}\quad
c_{k}^{N,\text{in}}=\begin{cases}
                  c^{\text{in}}_{k} & \text{if }k\leq N\\
                  0 &\text{else}.
                 \end{cases}
\end{equation*}
In particular, since $a_{k,\ell}^{N}$ and $s_{k}^{N}$ satisfy the assumptions~\cref{eq:Ass:a,eq:Ass:s} the moment estimates derived in Section~\ref{Sec:moment:estimates} still hold for $c^{N}$.

\subsection{Existence of a solution for the truncated system}

The following proposition states the existence of a unique classical global solution for~\eqref{eq:Smol:finite}

\begin{proposition}\label{Prop:existence:truncated}
 For each fixed $N\in\N$ the system~\eqref{eq:Smol:finite} has a unique solution $c^{N}=(c_{k}^{N})_{k\in\N}$ such that $c_{k}^{N}\in C^{1}([0,\infty),\R_{\geq 0})$ for each $k\in\N$. 
\end{proposition}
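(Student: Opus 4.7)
The plan is to treat \eqref{eq:Smol:finite} as a genuinely finite-dimensional system of ODEs in $\R^{N}$ for the vector $(c_{1}^{N},\dotsc,c_{N}^{N})$ (the components with $k>N$ are identically zero by definition and do not need to be considered). The right-hand side is a polynomial, hence $C^{\infty}$, in these $N$ unknowns, so the classical Cauchy--Lipschitz (Picard--Lindelöf) theorem yields a unique maximal $C^{1}$-solution on some interval $[0,T_{\max})$ with $0<T_{\max}\leq \infty$. This covers the local existence and uniqueness part without any further work.

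Next I would establish non-negativity, i.e.\ that the non-negative orthant $\R_{\geq 0}^{N}$ is positively invariant. The standard tangent/quasi-positivity argument applies: inspecting the right-hand side of~\eqref{eq:Smol:finite}, the only terms that can be negative are $-c_{k}^{N}\sum_{\ell=1}^{N-k}a_{k,\ell}c_{\ell}^{N}$ and $-r_{k}c_{k}^{N}$, and both vanish whenever $c_{k}^{N}=0$. Hence on the face $\{c_{k}^{N}=0\}\cap \R_{\geq 0}^{N}$ the derivative $\frac{\dd}{\dt}c_{k}^{N}$ reduces to $\tfrac{1}{2}\sum_{\ell=1}^{k-1}a_{k-\ell,\ell}c_{k-\ell}^{N}c_{\ell}^{N}+s_{k}\geq 0$ thanks to \eqref{eq:Ass:a} and \eqref{eq:Ass:s}. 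A standard comparison argument (or, more pedantically, a small perturbation $s_{k}\rightsquigarrow s_{k}+\eps$ followed by $\eps\to 0$) then shows that the solution cannot exit $\R_{\geq 0}^{N}$, so $c_{k}^{N}(t)\geq 0$ on $[0,T_{\max})$.

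Finally, I would obtain the a priori bound that forces $T_{\max}=\infty$. Since $c_{k}^{N}\geq 0$, the computation leading to~\eqref{eq:first:moment:inequality} applies verbatim to the truncated system, yielding $\tfrac{\dd}{\dt}\mom_{1}^{N}\leq \s_{1}-R_{*}\mom_{1}^{N}$ and hence, by Grönwall (as in Lemma~\ref{Lem:total:mass:1}),
\begin{equation*}
 \sum_{k=1}^{N}k\,c_{k}^{N}(t)\leq \max\{\mom_{1}^{\text{in}},\widehat{\s}_{1}\}\qquad\text{for all }t\in[0,T_{\max}).
\end{equation*}
In particular each $c_{k}^{N}(t)\leq \max\{\mom_{1}^{\text{in}},\widehat{\s}_{1}\}/k$ is bounded uniformly on $[0,T_{\max})$, which rules out finite-time blow-up of the $\R^{N}$-valued solution. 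Therefore $T_{\max}=\infty$ and $c^{N}\in C^{1}([0,\infty),\R_{\geq 0})^{\N}$, which is exactly the statement of the proposition.

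Nothing here is really an obstacle—the whole proof is routine ODE theory—but the one place a reader could slip is the non-negativity step, because the naive statement ``polynomial ODE preserves the positive orthant'' is false in general; one genuinely needs the structural observation that every negative term carries the factor $c_{k}^{N}$, so that the vector field is inward-pointing (or tangent) on each coordinate hyperplane.
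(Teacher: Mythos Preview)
Your proposal is correct and follows essentially the same route as the paper: local existence and uniqueness via Picard--Lindel\"of for the polynomial right-hand side, non-negativity via the quasi-positivity structure of the vector field, and global existence from the uniform bound on $\mom_{1}^{N}$ supplied by Lemma~\ref{Lem:total:mass:1}. The paper simply cites a reference for the quasi-positivity invariance result where you sketch the tangent-cone/perturbation argument directly.
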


\begin{proof}
 The proof of this statement follows from classical arguments of the theory of ordinary differential equations. For the sake of completeness, we outline the main steps. To simplify the notation we define functions $f_{k}\colon \R^{N}\to \R^{N}$ for $k=1,\ldots N$ through 
 \begin{equation*}
  f_{k}(x_{1},\ldots, x_{N})=\frac{1}{2}\sum_{\ell=1}^{k-1}a_{k-\ell,\ell}x_{k-\ell}x_{\ell}-x_{k}\sum_{\ell=1}^{N-k}a_{k,\ell}x_{\ell}+s_{k}-r_{k}x_{k}.
 \end{equation*}
 Thus, \eqref{eq:Smol:finite} can be rewritten as
 \begin{equation*}
  \frac{\dd}{\dd{t}}c_{k}^{N}=f_{k}(c_{1}^{N},\ldots,c_{N}^{N})\quad \text{for }k=1,\ldots, N.
 \end{equation*}
 As polynomials, the functions $f_{k}$ are in particular locally Lipschitz continuous. Thus, due to the Picard-Lindelöf theorem there exists a unique solution $c^{N}=(c_{k}^{N})_{k\in\N}$ on a maximal time interval $[0,T_{*})$, i.e.\@ $c_{k}^{N}\in C^{1}([0,T_{*}))$ for all $k=1,\ldots N$. 
 
 On the other hand, the function $f=(f_{k})_{k}$ is \emph{quasi-positive} in the notion of~\cite{PrW10} which precisely means that
 \begin{equation*}
  f_{k}(x_{1},\ldots, x_{k-1}, 0, x_{k+1},\ldots, x_{N})\geq 0\quad \text{if } x_{j}\geq 0 \text{ for all } j\in\{1,\ldots, N\}\setminus\{k\}. 
 \end{equation*}
 The validity of this property is easily checked since $a_{k,\ell}, s_{k}\geq 0$ by assumption. Thus, since $c_{k}^{\text{in},N}\geq 0$ for all $k\in\N$ it follows from~\cite[Satz~4.2.2.]{PrW10} that $c_{k}^{N}\geq 0$ on $[0,T_{*})$ for all $k=1,\ldots, N$.
 
 To show $T_{*}=\infty$ we rely on the moment estimates from Section~\ref{Sec:moment:estimates}. In fact, Lemma~\ref{Lem:total:mass:1} implies 
 \begin{equation*}
  \sup_{t\in[0,T_{*})} \sum_{k=1}^{N}kc_{k}^{N}(t)\leq \max\bigl\{\mom_{1}^{\text{in}},\widehat{\s}_{1}\bigr\}.
 \end{equation*}
 Thus, $c_{k}^{N}$ cannot blow up on the interval $[0,T_{*})$ which implies that $T_{*}=\infty$ (see~\cite[Satz~2.3.2]{PrW10}).
\end{proof}

\subsection{Existence of a global solution to~\eqref{eq:Smol:1}}

The general goal will be to pass to the limit $N\to\infty$ in the finite system~\eqref{eq:Smol:finite}. For this, we rely on the moment estimates from Section~\ref{Sec:moment:estimates} in order to obtain compactness. This approach is by now classical for coagulation (-fragmentation) equations and we follow here mainly~\cite{Lau02,Cos95a}. 

\begin{lemma}\label{Lem:est:der:1}
 For each fixed $T>0$ and $k\in\N$ there exists a constant $\mathscr{C}$ which depends on $T$ and $k$ but which is independent of $N$ such that
 \begin{equation*}
  \norm*{\frac{\dd}{\dd{t}}c_{k}^{N}(\cdot)}_{L^{1}(0,T)}\leq \mathscr{C}
 \end{equation*}
 for all $N\geq k$ where $c^{N}=(c_{k}^{N})_{k\in\N}$ is the solution so~\eqref{eq:Smol:finite} provided by Proposition~\ref{Prop:existence:truncated}.
\end{lemma}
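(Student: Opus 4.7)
The plan is to derive a uniform pointwise bound on $\bigl|\frac{\dd}{\dt} c_{k}^{N}(t)\bigr|$ for $t\in[0,T]$ and $N\geq k$, from which the $L^{1}$ estimate follows at once by multiplying by $T$. Since by Proposition~\ref{Prop:existence:truncated} the derivative is given classically by the right-hand side of~\eqref{eq:Smol:finite}, it suffices to bound each of the four terms there uniformly. The key ingredient is Lemma~\ref{Lem:total:mass:1}, which yields $\mom_{1}^{N}(t)\leq M_{1}\vcc=\max\{\mom_{1}^{\text{in}},\widehat{\s}_{1}\}$ uniformly in $t$ and $N$; by Remark~\ref{Rem:moments:monotonicity} this also gives $c_{k}^{N}(t)\leq \mom_{0}^{N}(t)\leq M_{1}$ and $\mom_{\mu}^{N}(t)\leq M_{1}$ for every $\mu\in[0,1]$.

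Using assumption~\eqref{eq:Ass:a}, the gain term is controlled via $a_{k-\ell,\ell}\leq 2A_{*}k^{\alpha+\beta}$ (valid since $k-\ell,\ell\leq k$), so that
\[
\frac{1}{2}\sum_{\ell=1}^{k-1}a_{k-\ell,\ell}c_{k-\ell}^{N}c_{\ell}^{N}\leq A_{*}k^{\alpha+\beta}(\mom_{0}^{N})^{2}\leq A_{*}k^{\alpha+\beta}M_{1}^{2}.
\]
For the loss term, $\alpha\leq\beta$ and $\ell\geq 1$ imply $a_{k,\ell}\leq 2A_{*}k^{\beta}\ell^{\beta}$, hence
\[
c_{k}^{N}\sum_{\ell=1}^{N-k}a_{k,\ell}c_{\ell}^{N}\leq 2A_{*}k^{\beta}c_{k}^{N}\mom_{\beta}^{N}\leq 2A_{*}k^{\beta}M_{1}^{2},
\]
where we used $\mom_{\beta}^{N}\leq \mom_{1}^{N}\leq M_{1}$ because $\beta\leq 1$. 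Finally, the source term is the constant $s_{k}$ and the removal term satisfies $r_{k}c_{k}^{N}\leq r_{k}M_{1}$. Summing the four contributions yields
\[
\Bigl|\frac{\dd}{\dt}c_{k}^{N}(t)\Bigr|\leq A_{*}k^{\alpha+\beta}M_{1}^{2}+2A_{*}k^{\beta}M_{1}^{2}+s_{k}+r_{k}M_{1}\vcc= K(k),
\]
uniformly in $t\geq 0$ and $N\geq k$, and integration over $[0,T]$ gives $\mathscr{C}\vcc=TK(k)$, which is the claim.

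No real obstacle arises here: the first-moment bound from Lemma~\ref{Lem:total:mass:1} together with the sublinear growth of $a_{k,\ell}$ in each argument ($\alpha,\beta\leq 1$) is already enough, and the higher-moment machinery of Lemma~\ref{Lem:moment:general} is not needed. The only small points to keep in mind are that $\alpha\leq\beta\leq 1$ (so that $\mom_{\beta}\leq\mom_{1}$) and that $k$ is fixed, so the factors $k^{\alpha+\beta}$, $k^{\beta}$, $s_{k}$ and $r_{k}$ are harmless constants depending on $k$ alone. The $N$-independence is crucial for the compactness argument used in the next step of passing to the limit $N\to\infty$.
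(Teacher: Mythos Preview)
Your proof is correct and follows the same strategy as the paper: derive a uniform-in-$N$ pointwise bound on $\bigl|\frac{\dd}{\dt}c_k^N(t)\bigr|$ from the first-moment estimate of Lemma~\ref{Lem:total:mass:1} together with the growth assumption~\eqref{eq:Ass:a}, and then integrate over $[0,T]$. The only cosmetic differences are in how the individual terms are estimated (you bound the gain term by $A_{*}k^{\alpha+\beta}(\mom_{0}^{N})^{2}$ where the paper uses $\mom_{\alpha}^{N}\mom_{\beta}^{N}$, and you leave $r_{k}$ as a fixed $k$-dependent constant, which is in fact the cleaner treatment since~\eqref{eq:Ass:r} only gives a lower bound).
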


\begin{proof}
 Due to the assumptions \cref{eq:Ass:a,eq:Ass:r} we have
 \begin{equation*}
  \abs*{\frac{\dd}{\dd{t}}c_{k}^{N}(t)}\leq \frac{A_{*}}{2}\sum_{\ell=1}^{k-1}\bigl((k-\ell)^{\alpha}\ell^{\beta}+(k-\ell)^{\beta}\ell^{\alpha}\bigr)c_{k-\ell}^{N}c_{\ell}^{N}+A_{*}c_{k}^{N}\sum_{\ell=1}^{N-k}(k^{\alpha}\ell^{\beta}+k^{\beta}\ell^{\alpha})c_{\ell}^{N}+s_{k}+R_{*}k^{\gamma}c_{k}^{N}.
 \end{equation*}
 Using the trivial estimates $k^{\mu }c_{k}^{N}(t)\leq \norm{c^{N}(t)}_{\ell_{\mu }^{1}}$ and $k^{\gamma}c_{k}^{N}(t)\leq k^{\gamma-1}\norm{c^{N}(t)}_{\ell_{1}^{1}}$ we further find
 \begin{equation*}
  \abs*{\frac{\dd}{\dd{t}}c_{k}^{N}(t)}\leq 3A_{*}\norm{c^{N}(t)}_{\ell_{\alpha}^{1}}\norm{c^{N}(t)}_{\ell_{\beta}^{1}}+s_{k}+R_{*}k^{\gamma-1}\norm{c^{N}(t)}_{\ell_{1}^{1}}.
 \end{equation*}
 Since $\alpha\leq \beta\leq 1$ we recall from Remark~\ref{Rem:moments:monotonicity} that $\norm{c^{N}(t)}_{\ell_{\alpha}^{1}}\leq \norm{c^{N}(t)}_{\ell_{\beta}^{1}}=\norm{c^{N}(t)}_{\ell_{1}^{1}}= \mom_{1}^{N}(t)$. Together with $s_{k}\leq \s_{1}$ this further yields
 \begin{equation}\label{eq:der:1:1}
  \abs*{\frac{\dd}{\dd{t}}c_{k}^{N}(t)}\leq 3A_{*}\bigl(\mom_{1}^{N}(t)\bigr)^{2}+\s_{1}+R_{*}k^{\gamma-1}\mom_{1}^{N}(t).
 \end{equation}
 Recalling once more from Lemma~\ref{Lem:total:mass:1} that $\mom_{1}^{N}(t)\leq \max\{\mom_{1}^{\text{in}},\widehat{\s}_{1}\}$ we immediately conclude
 \begin{equation*}
     \norm*{\frac{\dd}{\dd{t}}c_{k}^{N}(\cdot)}_{L^{1}(0,T)}\leq \Bigl(3A_{*}\max\{\mom_{1}^{\text{in}},\widehat{\s}_{1}\}^{2}+\s_{1}+R_{*}k^{\gamma-1}\max\{\mom_{1}^{\text{in}},\widehat{\s}_{1}\}\Bigr)T
 \end{equation*}
 which finishes the proof.
\end{proof}

The next proposition provides a certain stability of the right-hand side of~\eqref{eq:Smol:1} which will be the key in passing to the limit $N\to\infty$ in~\eqref{eq:Smol:finite}.

\begin{proposition}\label{Prop:convergence:operator}
 Let $(c^{n})_{n\in\N}$ and $(s^{n})_{n\in\N}$ be sequences with $c^{n}=(c^{n}_{k})_{k\in\N}\in L^{\infty}([0,\infty),\ell_{1}^{1})$ for each $n\in\N$ and assume that there exist $c=(c_{k})_{k\in\N}$ and $s=(s_{k})_{k\in\N}$ such that $s_{k}^{n}\to s_{k}$ as $n\to\infty$ as well as
 \begin{equation*}
  c_{k}^{n}(t)\longrightarrow c_{k}(t) \quad \text{as }n\longrightarrow \infty \text{ uniformly on compact subsets of }[0,\infty).
 \end{equation*}
 Let $a_{k,\ell}^{n}$ satisfy~\eqref{eq:Ass:a} uniformly with respect to $n\in\N$ and let $a_{k,\ell}^{n}\to a_{k,\ell}$ as $n\to\infty$ for each $k, \ell\in\N$. Assume further that there exists a constant $M_{1}>0$ such that 
 \begin{equation*}
  \mom_{1}^{n}(s)\vcc=\sum_{k=1}^{\infty}k c_{k}^{n}(s)\leq M_{1}\quad \text{for all } s\in[0,\infty) \quad \text{ and all }n\in\N.
 \end{equation*}
 Finally for some $\mu >1$ let $(c^{n})_{n\in\N}$ satisfy
 \begin{equation*}
  \mom_{\mu }^{n}(s)\vcc=\sum_{k=1}^{\infty}k^{\mu }c_{k}^{n}(s)\leq M_{\mu }(1+s^{-\rho})\quad \text{for all } s\in(0,\infty)\text{ and }n\in\N
 \end{equation*}
 with constants $M_{\mu }>0$ and $\rho\in[0,1)$. Then, we have for each $k\in\N$ fixed that
 \begin{multline}\label{eq:conv:op:1}
  \lim_{n\to\infty}\biggl(\frac{1}{2}\sum_{\ell=1}^{k-1}a^{n}_{k-\ell,\ell}c_{k-\ell}^{n}c_{\ell}^{n}-c_{k}^{n}\sum_{\ell=1}^{\infty}a^{n}_{k,\ell}c_{\ell}^{n}+s_{k}^{n}-r_{k}c_{k}^{n}\biggr)\\*
  =\frac{1}{2}\sum_{\ell=1}^{k-1}a_{k-\ell,\ell}c_{k-\ell}c_{\ell}-c_{k}\sum_{\ell=1}^{\infty}a_{k,\ell}c_{\ell}+s_{k}-r_{k}c_{k}
 \end{multline}
 uniformly on each compact subset of $(0,\infty)$. Moreover, we have for each $t>0$ that
 \begin{multline}\label{eq:conv:op:2}
  \lim_{n\to\infty}\biggl(\int_{0}^{t}\frac{1}{2}\sum_{\ell=1}^{k-1}a^{n}_{k-\ell,\ell}c_{k-\ell}^{n}(s)c_{\ell}^{n}(s)-c_{k}^{n}(s)\sum_{\ell=1}^{\infty}a^{n}_{k,\ell}c_{\ell}^{n}(s)+s_{k}^{n}-r_{k}c_{k}^{n}(s)\ds\biggr)\\*
  =\int_{0}^{t}\frac{1}{2}\sum_{\ell=1}^{k-1}a_{k-\ell,\ell}c_{k-\ell}(s)c_{\ell}(s)-c_{k}(s)\sum_{\ell=1}^{\infty}a_{k,\ell}c_{\ell}(s)+s_{k}-r_{k}c_{k}(s)\ds.
 \end{multline}
\end{proposition}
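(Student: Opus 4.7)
The plan is to treat the four terms on the right-hand side of~\eqref{eq:Smol:1} separately. The source $s_k^n \to s_k$ holds by hypothesis, the removal term $r_k c_k^n \to r_k c_k$ by pointwise convergence of $c_k^n$, and the finite coagulation-source sum $\tfrac{1}{2}\sum_{\ell=1}^{k-1} a^n_{k-\ell,\ell} c_{k-\ell}^n c_\ell^n$ converges to its counterpart uniformly on compact subsets of $[0,\infty)$, because each of the finitely many factors $a^n_{k-\ell,\ell}$ and $c_j^n$ (for $j\leq k$) converges and the former is bounded uniformly in $n$ by~\eqref{eq:Ass:a}. The only non-routine term is the coagulation sink $c_k^n \sum_{\ell=1}^\infty a^n_{k,\ell} c_\ell^n$, for which one has to commute the limit with an infinite sum.

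For this sink I would fix $k\in\N$ and a compact $K\subset (0,\infty)$, introduce a cutoff $L\in\N$, and split the sum into $\ell\leq L$ and $\ell>L$. The finite part converges uniformly on $K$ as $n\to\infty$ by the hypotheses. For the tail, using that~\eqref{eq:Ass:a} holds uniformly in $n$ together with $\alpha\leq\beta\leq 1<\mu$, I would estimate
\begin{equation*}
  \sum_{\ell>L} a^n_{k,\ell} c_\ell^n(s) \leq 2 A_* k^\beta \sum_{\ell>L} \ell^\beta c_\ell^n(s) \leq 2 A_* k^\beta L^{\beta-\mu}\, \mom_\mu^n(s) \leq 2 A_* k^\beta L^{\beta-\mu} M_\mu\bigl(1+s^{-\rho}\bigr).
\end{equation*}
Since $K$ is bounded away from zero the factor $1+s^{-\rho}$ is bounded on $K$, and $\beta-\mu<0$ makes the tail arbitrarily small as $L\to\infty$, uniformly in $n$ and in $s\in K$. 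Multiplying by $c_k^n(s)\leq M_1/k$ and noting that the same tail bound survives in the limit $n\to\infty$ (by Fatou applied to the pointwise limit $c$), one first sends $n\to\infty$ for fixed $L$ and then $L\to\infty$, yielding~\eqref{eq:conv:op:1}.

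For~\eqref{eq:conv:op:2} I would apply the dominated convergence theorem on $[0,t]$. Pointwise convergence of the integrand on $(0,t]$, hence almost everywhere on $[0,t]$, is exactly~\eqref{eq:conv:op:1}. A uniform-in-$n$ domination bound follows from $c_k^n(s)\leq M_1/k$ together with
\begin{equation*}
  \sum_{\ell=1}^\infty a^n_{k,\ell} c_\ell^n(s) \leq 2 A_* k^\beta \sum_{\ell=1}^\infty \ell c_\ell^n(s) \leq 2 A_* k^\beta M_1,
\end{equation*}
the trivial bound $r_k c_k^n(s)\leq r_k M_1/k$, and the boundedness of the convergent sequence $(s_k^n)_n$. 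The resulting dominating function is a constant depending only on $k$ and the data, which is trivially integrable on $[0,t]$, so DCT delivers~\eqref{eq:conv:op:2}.

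The main obstacle is the tail estimate for the coagulation sink; this is precisely where the higher-moment hypothesis with $\mu>1$ is indispensable, as the first moment alone is insufficient to force the tail to zero as $L\to\infty$. The role of $\rho<1$ is secondary and only reflects that the estimate must remain locally uniform on $(0,\infty)$; all remaining work reduces to pointwise convergence of finite sums and a standard application of dominated convergence.
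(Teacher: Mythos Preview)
Your argument is correct and follows essentially the same decomposition as the paper: the finite terms are handled by pointwise convergence, and the tail of the coagulation sink is controlled via $a^n_{k,\ell}\leq 2A_{*}(k\ell)^{\beta}$ together with the uniform $\mom_{\mu}^{n}$ bound, exactly as the paper does (the paper's cutoff parameter is called $Z$, yours $L$).

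The one noteworthy difference is in the passage to~\eqref{eq:conv:op:2}. The paper carries the tail estimate $4A_{*}M_{1}M_{\mu}(1+s^{-\rho})Z^{\beta-\mu}$ through the time integral and uses $\int_{0}^{t}(1+s^{-\rho})\,\dd s<\infty$, which is where the hypothesis $\rho<1$ actually enters. You instead observe that the entire integrand is bounded uniformly in $n$ and $s$ by a constant depending only on $k$ and the data (since $c_{j}^{n}(s)\leq M_{1}/j$ and $\sum_{\ell}\ell^{\beta}c_{\ell}^{n}(s)\leq \mom_{1}^{n}(s)\leq M_{1}$), and then apply dominated convergence with that constant. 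This is slightly more elementary and, as you note, makes the restriction $\rho<1$ unnecessary for the proposition itself; the paper's route has the advantage of making the role of the higher-moment bound explicit all the way through, but your shortcut is perfectly valid.
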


\begin{proof}
 We first note that by means of Fatou's Lemma, we immediately conclude from the assumptions of the proposition that
 \begin{equation*}
  \mom_{1}(s)\vcc=\sum_{k=1}^{\infty}kc_{k}(s)\leq M_{1}\quad \text{and}\quad \mom_{\mu }(s)\vcc=\sum_{k=1}^{\infty}k^{\mu }c_{k}(s)\leq M_{\mu }(1+s^{-\rho}).
 \end{equation*}
 We will first show the claimed convergence in~\eqref{eq:conv:op:1} from which~\eqref{eq:conv:op:2} will then easily follow. 
 
 To prove~\eqref{eq:conv:op:1}, we consider the different terms separately and note first that we already have $s_{k}^{n}\to s_{k}$ due to the assumptions. Moreover, since $c_{\ell}^{n}\to c_{\ell}$ uniformly on compact subsets of $[0,\infty)$ as $n\to\infty$, we immediately get
 \begin{equation}\label{eq:general:conv:1}
  \frac{1}{2}\sum_{\ell=1}^{k-1}a^{n}_{k-\ell,\ell}c_{k-\ell}^{n}c_{\ell}^{n}\longrightarrow \frac{1}{2}\sum_{\ell=1}^{k-1}a_{k-\ell,\ell}c_{k-\ell}c_{\ell}\quad \text{and}\quad r_{k}c_{k}^{n}\longrightarrow r_{k}c_{k} \quad \text{as }n\longrightarrow \infty
 \end{equation}
 locally uniformly on $[0,\infty)$. Thus, it remains to estimate the difference $c_{k}^{n}\sum_{\ell=1}^{\infty}a^{n}_{k,\ell}c_{\ell}^{n}-c_{k}\sum_{\ell=1}^{\infty}a_{k,\ell}c_{\ell}$ which we rewrite for some $Z\in \N$ to get
 \begin{multline}\label{eq:splitting:nonlinear:term}
  \abs*{c_{k}^{n}\sum_{\ell=1}^{\infty}a_{k,\ell}^{n}c_{\ell}^{n}-c_{k}\sum_{\ell=1}^{\infty}a_{k,\ell}c_{\ell}}\\*
  \leq \abs{c_{k}^{n}-c_{k}}\sum_{\ell=1}^{\infty}a_{k,\ell}^{n}c_{\ell}^{n}+c_{k}\sum_{\ell=1}^{Z-1}\Bigl(\abs*{a_{k,\ell}^{n}-a_{k,\ell}}c_{\ell}^{n}+a_{k,\ell}\abs{c_{\ell}^{n}-c_{\ell}}\Bigr)+c_{k}\sum_{\ell=Z}^{\infty}\Bigl(a_{k,\ell}^{n}c_{\ell}^{n}+a_{k,\ell}c_{\ell}\Bigr).
 \end{multline}
 For the first expression on the right-hand side we obtain together with~\eqref{eq:Ass:a} and $k^{\alpha}\ell^{\beta}+k^{\beta}\ell^{\alpha}\leq 2(k\ell)^{\beta}$ as well as $\ell^{\beta}\leq \ell$ for all $k,\ell\in\N$ since $\alpha\leq \beta$ that
 \begin{multline}\label{eq:general:conv:2}
  \abs{c_{k}^{n}(s)-c_{k}(s)}\sum_{\ell=1}^{\infty}a_{k,\ell}^{n}c_{\ell}^{n}(s)\leq A_{*}\abs*{c_{k}^{n}(s)-c_{k}(s)}\sum_{\ell=1}^{\infty}(k^{\alpha}\ell^{\beta}+k^{\beta}\ell^{\alpha})c_{\ell}^{n}(s)\\*
  \leq 2A_{*}k^{\beta}\abs{c_{k}^{n}(s)-c_{k}(s)}\sum_{\ell=1}^{\infty}\ell c_{\ell}^{n}(s)\leq 2A_{*}M_{1}k^{\beta}\abs{c_{k}^{n}(s)-c_{k}(s)}\longrightarrow 0 
 \end{multline}
 as $n\to\infty$ locally uniformly on $[0,\infty)$. For the second expression on the right-hand side of~\eqref{eq:splitting:nonlinear:term} we find in the same way as in~\eqref{eq:general:conv:1} that
 \begin{equation}\label{eq:general:conv:3}
  \lim_{n\to\infty}\biggl(c_{k}(s)\sum_{\ell=1}^{Z-1}\Bigl(\abs*{a_{k,\ell}^{n}-a_{k,\ell}}c_{\ell}^{n}(s)+a_{k,\ell}\abs{c_{\ell}^{n}(s)-c_{\ell}(s)}\Bigr)\biggr)=0 \quad \text{locally uniformly on }[0,\infty).
 \end{equation}
 Finally, for the third term on the right-hand side of~\eqref{eq:splitting:nonlinear:term} we note that $\mu $, as specified in the assumptions, satisfies $\mu >\beta$ since $\beta\in[0,1]$. Thus, using additionally that $a_{k,\ell},a_{k,\ell}^{n}\leq 2A_{*} k^{\beta}\ell^{\beta}$ as before and $k^{\beta}c_{k}(s)\leq M_{1}$ we get
 \begin{multline}\label{eq:general:conv:critical}
  c_{k}(s)\sum_{\ell=Z}^{\infty}\Bigl(a_{k,\ell}^{n}c_{\ell}^{n}(s)+a_{k,\ell}c_{\ell}(s)\Bigr)\leq 2A_{*}k^{\beta}c_{k}(s)\sum_{\ell=Z}^{\infty}\ell^{\beta}\bigl(c_{\ell}^{n}(s)+c_{\ell}(s)\bigr)\\*
  \leq 2A_{*}M_{1}\sum_{\ell=Z}^{\infty}\ell^{\beta-\mu }\ell^{\mu }\bigl(c_{\ell}^{n}(s)+c_{\ell}(s)\bigr)\\*
  \leq 2A_{*}M_{1}Z^{\beta-\mu }\bigl(\mom_{\mu }^{n}(s)+\mom_{\mu }(s)\bigr)\leq 4A_{*}M_{1}M_{\mu }(1+s^{-\rho})Z^{\beta-\mu }.
 \end{multline}
Since $\mu >\beta$ and $\rho\in[0,1)$, the right-hand side converges to zero locally uniformly on $(0,\infty)$ and independently of $n$ as $Z\to\infty$. Thus, the relation~\eqref{eq:conv:op:1} follows upon collecting \cref{eq:general:conv:1,eq:general:conv:2,eq:general:conv:3} and taking first the limit $n\to\infty$ and then $Z\to\infty$. Furthermore, since the estimates \cref{eq:general:conv:1,eq:general:conv:2,eq:general:conv:3} are uniform with respect to $s$ and since $\int_{0}^{t}(1+s^{-\rho})\ds\leq (t+t^{1-\rho}/(1-\rho))$ the convergence~\eqref{eq:conv:op:2} is a direct consequence.
\end{proof}

We are now prepared to prove the existence of global solutions to~\eqref{eq:Smol:1}.

\begin{proof}[Proof of Theorem~\ref{Thm:existence:evolution}]
 Let $(c^{N})_{N\in\N}$ be the sequence of solutions for the finite systems provided by Proposition~\ref{Prop:existence:truncated}. According to Lemma~\ref{Lem:total:mass:1} we have $\mom_{1}^{N}(t)\leq \max\{\mom_{1}^{\text{in}},\widehat{\s}_{1}\}$ such that Lemma~\ref{Lem:est:der:1} ensures that for fixed $T>0$ and $k\in\N$ the sequence $(c_{k}^{N})_{N\geq k}$ is uniformly bounded in $L^{\infty}(0,T)\cap W^{1,1}(0,T)$. Thus, an Arzela-Ascoli type argument yields that there exists a sub-sequence of $c^{N}=(c_{k}^{N})_{N\in\N}$ (which we will not relabel) and a sequence $c=(c_{k})_{k\in\N}$ which is of locally bounded variation such that
 \begin{equation}\label{eq:existence:1}
  c_{k}^{N}(t)\longrightarrow c_{k}(t) \quad \text{as } N\longrightarrow \infty
 \end{equation}
 uniformly on each compact subset of $[0,\infty)$. Due to Fatou's Lemma and \cref{Lem:total:mass:1,Lem:moment:general} we have additionally that
 \begin{equation}\label{eq:limit:moments}
  \mom_{1}(t)\leq \max\bigl\{\mom_{1}^{\text{in}},\widehat{\s}_{1}\bigr\}\quad \text{and}\quad \mom_{\mu }(t)\leq C_{\mu }(1+t^{-\rho}) \quad\text{for some }\mu >0 \text{ and }\rho\in[0,1).
 \end{equation}
 Moreover, equation~\eqref{eq:Smol:finite} can be rewritten as
 \begin{equation*}
  c_{k}^{N}(t)=c_{k}^{N,\text{in}}+\int_{0}^{t}\biggl(\frac{1}{2}\sum_{\ell=1}^{k-1}a_{k-\ell,\ell}^{N}c_{k-\ell}^{N}(s)c_{\ell}^{N}(s)-c_{k}^{N}(s)\sum_{\ell=1}^{\infty}a_{k,\ell}^{N}c_{\ell}^{N}(s)+s_{k}-r_{k}c_{k}^{N}(s)\biggr)\ds.
 \end{equation*}
 Thus, taking the limit $N\to\infty$, Proposition~\ref{Prop:convergence:operator} yields
 \begin{equation}\label{eq:existence:integrated}
  c_{k}(t)=c_{k}^{\text{in}}+\int_{0}^{t}\biggl(\frac{1}{2}\sum_{\ell=1}^{k-1}a_{k-\ell,\ell}c_{k-\ell}(s)c_{\ell}(s)-c_{k}(s)\sum_{\ell=1}^{\infty}a_{k,\ell}c_{\ell}(s)+s_{k}-r_{k}c_{k}(s)\biggr)\ds.
 \end{equation}
 To conclude, we note that from~\eqref{eq:existence:1} together with~\eqref{eq:limit:moments} we obtain $c\in L^{\infty}([0,T),\ell_{1}^{1})\cap C^{1}((0,T),\ell_{\mu}^{1})$ and that $c_{k}\colon [0,\infty)\to [0,\infty)$ is continuous with $c(0)=c^{\text{in}}$. Differentiating in~\eqref{eq:existence:integrated} finally yields that $c$ solves~\eqref{eq:Smol:1}.
\end{proof}

\section{A functional inequality}\label{Sec:func:inequality}

In this section we are going to derive a functional inequality similar to~\cite{FoM04}, which will be the key in the proofs of \cref{Thm:existence:stat:sol,Thm:convergence}. As a preliminary step we prove the following lemma.

\begin{lemma}\label{Lem:techhnical:estimate:1}
 For each $\mu\geq 1$ we have
 \begin{equation*}
  (k+\ell)^{\mu}-k^{\mu}+\ell^{\mu}\leq 2^{\max\{\mu-2,0\}}\max\{\mu,\mu(\mu-1)\}\ell^{\max\{1,\mu-1\}}k^{\mu-1}+2\ell^{\mu}
 \end{equation*}
 for all $k,\ell\in\N$.
\end{lemma}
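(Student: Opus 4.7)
The plan is to first rearrange: subtracting $2\ell^\mu$ from both sides reduces the assertion to the Carleman-type bound
\[
 (k+\ell)^\mu - k^\mu - \ell^\mu \leq 2^{\max\{\mu-2,0\}}\max\{\mu,\mu(\mu-1)\}\,\ell^{\max\{1,\mu-1\}} k^{\mu-1},
\]
which naturally splits into the two regimes $\mu\in[1,2]$ and $\mu\geq 2$ corresponding to which member of each pair realises the maximum.

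In the range $\mu\in[1,2]$ I would divide through by $k^\mu$ and set $x\vcc= \ell/k\geq 0$; the assertion then reduces to the scalar inequality $(1+x)^\mu\leq 1+\mu x+x^\mu$. To prove this, define $g(x)\vcc= 1+\mu x+x^\mu-(1+x)^\mu$ and check $g(0)=0$, $g'(0)=0$, and
\[
 g''(x)=\mu(\mu-1)\bigl[x^{\mu-2}-(1+x)^{\mu-2}\bigr]\geq 0,
\]
the positivity of the bracket following from $\mu-2\leq 0$, which makes $t\mapsto t^{\mu-2}$ non-increasing on $(0,\infty)$. Two integrations give $g\geq 0$, and the resulting estimate $(k+\ell)^\mu-k^\mu-\ell^\mu\leq\mu\ell k^{\mu-1}$ matches the claim on this range, since here $\max\{\mu,\mu(\mu-1)\}=\mu$ and $\max\{1,\mu-1\}=1$.

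For $\mu\geq 2$ I would instead exploit the symmetric second-order identity
\[
 (k+\ell)^\mu - k^\mu - \ell^\mu = \mu(\mu-1)\int_0^{k}\!\!\int_0^{\ell}(s+t)^{\mu-2}\,\dt\,\ds,
\]
obtained by two applications of the fundamental theorem of calculus to $(a,b)\mapsto (a+b)^\mu$. Since now $\mu-2\geq 0$, the integrand is monotone in $s+t$ and therefore pointwise bounded by $(k+\ell)^{\mu-2}$. The crucial discrete input is the elementary observation $k+\ell\leq 2k\ell$ valid for all $k,\ell\geq 1$ (since $2k\ell-k-\ell=k(\ell-1)+\ell(k-1)\geq 0$); raising to the power $\mu-2\geq 0$ yields $(k+\ell)^{\mu-2}\leq 2^{\mu-2}k^{\mu-2}\ell^{\mu-2}$, and the trivial integration $\int_0^k\int_0^\ell\dt\,\ds=k\ell$ finishes the bound with prefactor $2^{\mu-2}\mu(\mu-1)$ and exponent $\mu-1$, as required.

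The principal pitfall I anticipate is that the more naive approach via a first-order Taylor expansion combined with the standard convexity estimate $(k+\ell)^{\mu-1}\leq 2^{\mu-2}(k^{\mu-1}+\ell^{\mu-1})$ leaves a residual $\ell^\mu$-term with a prefactor $2^{\mu-2}>1$ that cannot be absorbed into the $+2\ell^\mu$ on the right-hand side. The symmetric double-integral representation circumvents this by treating $k$ and $\ell$ on equal footing, and it is precisely at this point that the standing hypothesis $k,\ell\in\N$ enters the argument through $k+\ell\leq 2k\ell$.
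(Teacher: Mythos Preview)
Your proof is correct and follows the same overall strategy as the paper: reduce to bounding $(k+\ell)^\mu-k^\mu-\ell^\mu$, split at $\mu=2$, obtain $\mu\ell k^{\mu-1}$ in the low regime and $2^{\mu-2}\mu(\mu-1)(k\ell)^{\mu-1}$ in the high regime via the discrete fact $k+\ell\leq 2k\ell$. The differences are purely in the calculus machinery. For $\mu\in[1,2]$ the paper writes $h_\mu(z)=\tfrac{\mu}{z}\int_0^z[(1+x)^{\mu-1}-x^{\mu-1}]\,dx$ and invokes $(1+x)^{\mu-1}-x^{\mu-1}\leq 1$, whereas you use the convexity argument $g''\geq 0$; both yield the same bound. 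For $\mu>2$ the paper iterates once more to get $h_\mu(z)=\tfrac{\mu(\mu-1)}{z}\int_0^z\int_x^{x+1}y^{\mu-2}\,dy\,dx\leq\mu(\mu-1)(1+z)^{\mu-2}$ and then unravels the substitution $z=\ell/k$, while you use the symmetric representation $(k+\ell)^\mu-k^\mu-\ell^\mu=\mu(\mu-1)\int_0^k\int_0^\ell(s+t)^{\mu-2}\,dt\,ds$ directly in the $(k,\ell)$ variables. Your version is a little cleaner in that it avoids the auxiliary function $h_\mu$ altogether and makes the symmetry in $k,\ell$ manifest; the paper's route is perhaps more closely tied to the single-variable analysis it has already set up. Either way the argument and the constants obtained are identical.
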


\begin{proof}
 It turns out to be convenient to estimate the expression
 \begin{equation*}
  \frac{(k+\ell)^{\mu}-k^{\mu}-\ell^{\mu}}{k^{\mu-1}}=\ell\frac{k}{\ell}\biggl(\Bigl(1+\frac{\ell}{k}\Bigr)^{\mu}-1-\Bigl(\frac{\ell}{k}\Bigr)^{\mu}\biggr).
 \end{equation*}
 Thus, defining $z=\ell/k$ the right-hand side reads $\ell h_{\mu}(z)$ with 
 \begin{equation*}
  h_{\mu}(z)=\frac{(1+z)^{\mu}-1-z^{\mu}}{z}.
 \end{equation*}
 The expression $h_{\mu}(z)$ can be rewritten as
 \begin{equation}\label{eq:tech:est:2}
  h_{\mu}(z)=\frac{(1+z)^{\mu}-1-z^{\mu}}{z}=\frac{1}{z}\biggl(\int_{0}^{z}\del_{x}(1+x)^{\mu}\dx-\int_{0}^{z}\del_{x}x^{\mu}\dx\biggr)=\frac{\mu}{z}\int_{0}^{z}(1+x)^{\mu-1}-x^{\mu-1}\dx.
 \end{equation}
 Now, we have to treat the two cases $\mu\leq 2$ and $\mu>2$ separately. In the first one, i.e.\@ for $\mu\leq 2$ we note that the map $x\mapsto x^{\mu-1}$ is Hölder continuous with exponent $\mu-1$ and constant one, i.e.\@ we have $(1+x)^{\mu-1}-x^{\mu-1}\leq 1$ for all $x\geq 0$. Thus, we conclude directly from~\eqref{eq:tech:est:2} that
 \begin{equation*}
  h_{\mu}(z)\leq \mu \qquad \text{for all }z\geq 0 \text{ if }\mu\leq 2.
 \end{equation*}
 Recalling $(k+\ell)^{\mu}-k^{\mu}-\ell^{\mu}=\ell k^{\mu-1} h_{\mu}(\ell/k)$ yields the estimate
 \begin{equation}\label{eq:tech:est:3}
  (k+\ell)^{\mu}-k^{\mu}-\ell^{\mu}\leq \mu \ell k^{\mu-1} \qquad \text{for all }k,\ell\in\N\text{ if }\mu\leq 2.
 \end{equation}
On the other hand, if $\mu>2$, we rewrite the right-hand side of~\eqref{eq:tech:est:2} further to obtain
 \begin{equation*}
  h_{\mu}(z)=\frac{\mu}{z}\int_{0}^{z}\int_{x}^{x+1}\del_{y}y^{\mu-1}\dy\dx=\frac{\mu(\mu-1)}{z}\int_{0}^{z}\int_{x}^{x+1}y^{\mu-2}\dy\dx.
 \end{equation*}
 Since for $\mu>2$ the map $y\mapsto y^{\mu-2}$ is increasing, we can estimate the right-hand side to get
 \begin{equation*}
  h_{\mu}(z)=\frac{\mu(\mu-1)}{z}\int_{0}^{z}\int_{x}^{x+1}y^{\mu-2}\dy\dx\leq \frac{\mu(\mu-1)}{z}\int_{0}^{z}(x+1)^{\mu-2}\dx\leq \mu(\mu-1)(z+1)^{\mu-2}.
 \end{equation*}
 Again, we use $(k+\ell)^{\mu}-k^{\mu}-\ell^{\mu}=\ell k^{\mu-1} h_{\mu}(\ell/k)$ to find for all $k,\ell\in\N$ that
 \begin{equation*}
  (k+\ell)^{\mu}-k^{\mu}-\ell^{\mu}\leq \mu(\mu-1)\ell k^{\mu-1}\Bigl(\frac{\ell}{k}+1\Bigr)^{\mu-2}=\mu(\mu-1)k\ell (k+\ell)^{\mu-2} \qquad \text{if }\mu>2.
 \end{equation*}
To estimate the right-hand side further, we note that $k+\ell\leq 2k\ell$ for all $k,\ell\in\N$. This further implies
\begin{equation*}
 (k+\ell)^{\mu}-k^{\mu}-\ell^{\mu}\leq 2^{\mu-2}\mu(\mu-1)k^{\mu-1}\ell^{\mu-1} \qquad \text{if }\mu>2.
\end{equation*}
Combining this estimate with~\eqref{eq:tech:est:3} we obtain
\begin{equation*}
 (k+\ell)^{\mu}-k^{\mu}-\ell^{\mu}\leq 2^{\max\{\mu-2,0\}}\max\{\mu,\mu(\mu-1)\}\ell^{\max\{1,\mu-1\}}k^{\mu-1} \quad \text{for all }k,\ell\in\N \text{ and all }\mu\geq 1.
\end{equation*}
From this the claim immediately follows.
\end{proof}

We can now establish the following functional inequality satisfied by the difference of two solutions to~\eqref{eq:Smol:1}.

\begin{lemma}\label{Lem:functional:inequality:1:b}
 Let $c=(c_{k})_{k\in\N}$ and $d=(d_{k})_{k\in\N}$ be solutions to~\eqref{eq:Smol:1}. For each $\mu\geq 1$ we have
 \begin{equation*}
  \frac{\dd}{\dd{t}}\sum_{k=1}^{\infty}k^{\mu}\abs{c_{k}-d_{k}}\leq \biggl(2A_{*}(C_{\mu}+2)\sum_{\ell=1}^{\infty}\ell^{\mu+\beta}(c_{\ell}+d_{\ell})-R_{*}\biggr)\sum_{k=1}^{\infty}k^{\mu}\abs{c_{k}-d_{k}}\quad \text{for all }t>0
 \end{equation*}
 with $C_{\mu}\vcc=2^{\max\{\mu-2,0\}}\max\{\mu,\mu(\mu-1)\}$.
\end{lemma}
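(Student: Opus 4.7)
Set $w_k\vcc= c_k-d_k$. Since the source $s_k$ cancels in the difference, $w_k$ satisfies
\begin{equation*}
 \frac{\dd}{\dt}w_k=\frac{1}{2}\sum_{\ell=1}^{k-1}a_{k-\ell,\ell}(c_{k-\ell}c_\ell-d_{k-\ell}d_\ell)-\sum_{\ell=1}^{\infty}a_{k,\ell}(c_kc_\ell-d_kd_\ell)-r_kw_k.
\end{equation*}
Since each $w_k\in C^1$, the function $|w_k|$ is Lipschitz in $t$ and satisfies $\frac{\dd}{\dt}|w_k|=\sgn(w_k)\frac{\dd}{\dt}w_k$ almost everywhere (the standard way to make this rigorous is to approximate $|\cdot|$ by a smooth convex $\phi_\eps$ with $|\phi_\eps'|\leq 1$, test against $\varphi_k^\eps=k^\mu\phi_\eps'(w_k)$, and pass to the limit). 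The plan is to multiply the above equation by the test function $\varphi_k=k^\mu\sgn(w_k)$, sum over $k$, and apply the weak coagulation identity \eqref{eq:weak:coag:op}, whose validity for sequences of polynomial growth is justified by the moment bounds in Section~\ref{Sec:moment:estimates}.

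The algebraic identity $c_kc_\ell-d_kd_\ell=\frac{1}{2}[(c_k+d_k)w_\ell+(c_\ell+d_\ell)w_k]$ together with the symmetry of $a_{k,\ell}$ and of $\Delta_{k,\ell}\vcc=\varphi_{k+\ell}-\varphi_k-\varphi_\ell$ reduces the coagulation contribution to the single bilinear sum
\begin{equation*}
  \frac{1}{2}\sum_{k,\ell=1}^{\infty}a_{k,\ell}(c_k+d_k)\,w_\ell\,[\varphi_{k+\ell}-\varphi_k-\varphi_\ell].
\end{equation*}
Since $w_\ell\varphi_\ell=\ell^\mu|w_\ell|$, while $w_\ell\varphi_{k+\ell}\leq(k+\ell)^\mu|w_\ell|$ and $-w_\ell\varphi_k\leq k^\mu|w_\ell|$, the bracket is bounded above by $|w_\ell|\bigl[(k+\ell)^\mu+k^\mu-\ell^\mu\bigr]$. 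Applying Lemma~\ref{Lem:techhnical:estimate:1} with the roles of $k$ and $\ell$ interchanged gives
\begin{equation*}
 (k+\ell)^\mu+k^\mu-\ell^\mu\leq C_\mu k^{\max\{1,\mu-1\}}\ell^{\mu-1}+2k^\mu,
\end{equation*}
which is exactly the orientation needed so that later the higher power $\mu+\beta$ ends up on the $k$-index (attached to $c_k+d_k$) and the power $\mu$ ends up on $\ell$ (attached to $|w_\ell|$).

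Using $a_{k,\ell}\leq A_*(k^\alpha\ell^\beta+k^\beta\ell^\alpha)$, the coagulation term splits into four double sums of the form $k^{A}\ell^{B}(c_k+d_k)|w_\ell|$, with exponents $A\in\{\alpha+\max\{1,\mu-1\},\beta+\max\{1,\mu-1\},\alpha+\mu,\beta+\mu\}$ and corresponding $B$'s. A direct check, using $\alpha\leq\beta\leq 1\leq\mu$ and $\max\{1,\mu-1\}\leq\mu$, shows that in each case $A\leq\mu+\beta$ and $B\leq\mu$; hence each sum factors and is dominated by a constant multiple of $\bigl(\sum_k k^{\mu+\beta}(c_k+d_k)\bigr)\bigl(\sum_\ell\ell^\mu|w_\ell|\bigr)$, and tallying the coefficients yields the prefactor $2A_*(C_\mu+2)$. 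The removal contribution is handled by the trivial bound $r_k\geq R_*k^\gamma\geq R_*$ for $k\geq 1$ (since $\gamma>0$), giving $-\sum_k k^\mu r_k|w_k|\leq -R_*\sum_k k^\mu|w_k|$. Factoring $\sum_k k^\mu|w_k|$ produces the claimed inequality. The only delicate point is the rigorous justification of step one—differentiating $\sum_k k^\mu|w_k|$ in time under the interchanges of sum and $\sgn$—for which one either uses the smooth approximation of $|\cdot|$ sketched above or invokes the $C^1((0,T),\ell_\mu^1)$-regularity guaranteed by Definition~\ref{Def:solution}(ii) to exchange derivative and (absolutely convergent) sum.
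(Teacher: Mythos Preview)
Your proof is correct and follows essentially the same route as the paper: test with $\varphi_k=k^{\mu}\sgn(c_k-d_k)$, use the weak coagulation identity, bound the bracket via Lemma~\ref{Lem:techhnical:estimate:1}, and then control all exponent pairs by $k^{\mu+\beta}\ell^{\mu}$ using $\alpha\leq\beta\leq 1\leq\mu$. The only cosmetic difference is that you use the symmetric decomposition $c_kc_\ell-d_kd_\ell=\tfrac{1}{2}\bigl[(c_k+d_k)w_\ell+(c_\ell+d_\ell)w_k\bigr]$ whereas the paper uses the asymmetric one $(c_k-d_k)c_\ell+(c_\ell-d_\ell)d_k$ and then symmetrises afterwards; both collapse to the same double sum. (Minor remark: your tally actually gives $A_*(C_\mu+2)$ once the factor $\tfrac{1}{2}$ in front of the coagulation sum is accounted for, which is sharper than the stated $2A_*(C_\mu+2)$; the paper likewise drops this $\tfrac{1}{2}$ in the last step.)
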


\begin{proof}
  We take the difference of the relations~\eqref{eq:Smol:weak} for $c_{k}$ and $d_{k}$ which, after some rearrangement, reads
 \begin{equation*}
  \frac{\dd}{\dd{t}}\sum_{k=1}^{\infty}(c_{k}-d_{k})\varphi_{k}=\frac{1}{2}\sum_{k=1}^{\infty}\sum_{\ell=1}^{\infty}a_{k,\ell}\bigl[(c_{k}-d_{k})c_{\ell}+(c_{\ell}-d_{\ell})d_{k}\bigr]\bigl[\varphi_{k+\ell}-\varphi_{k}-\varphi_{\ell}\bigr]-\sum_{k=1}^{\infty}r_{k}(c_{k}-d_{k})\varphi_{k}.
 \end{equation*}
Choosing $\varphi_{k}=k^{\mu}\sgn(c_{k}-d_{k})$ leads to the estimate
\begin{multline*}
 \frac{\dd}{\dd{t}}\sum_{k=1}^{\infty}k^{\mu}\abs{c_{k}-d_{k}}\leq \frac{1}{2}\sum_{k=1}^{\infty}\sum_{\ell=1}^{\infty}a_{k,\ell}c_{\ell}\abs{c_{k}-d_{k}}\bigl[(k+\ell)^{\mu}-k^{\mu}+\ell^{\mu}\bigr]\\*
 +\frac{1}{2}\sum_{k=1}^{\infty}\sum_{\ell=1}^{\infty}a_{k,\ell}d_{k}\abs{c_{\ell}-d_{\ell}}\bigl[(k+\ell)^{\mu}+k^{\mu}-\ell^{\mu}\bigr]-\sum_{k=1}^{\infty}k^{\mu}r_{k}\abs{c_{k}-d_{k}}.
\end{multline*}
Due to the symmetry of the kernel $a_{k,\ell}$, the right-hand side can be simplified yielding
\begin{equation}\label{eq:functional:ineq:1:b}
 \frac{\dd}{\dd{t}}\sum_{k=1}^{\infty}k^{\mu}\abs{c_{k}-d_{k}}\leq \frac{1}{2}\sum_{k=1}^{\infty}\sum_{\ell=1}^{\infty}a_{k,\ell}\bigl(c_{\ell}+d_{\ell}\bigr)\abs{c_{k}-d_{k}}\bigl[(k+\ell)^{\mu}-k^{\mu}+\ell^{\mu}\bigr]-\sum_{k=1}^{\infty}k^{\mu}r_{k}\abs{c_{k}-d_{k}}.
\end{equation}
Due to Lemma~\ref{Lem:techhnical:estimate:1} we have
\begin{equation*}
 (k+\ell)^{\mu}-k^{\mu}+\ell^{\mu}\leq C_{\mu}\ell^{\max\{1,\mu-1\}}k^{\mu-1}+2\ell^{\mu}.
\end{equation*}
Together with~\eqref{eq:Ass:a} and $\alpha\leq \beta\leq 1\leq \mu$ this leads to the estimate
\begin{align*}
 &\phantom{{}\leq{}} a_{k,\ell}\bigl[(k+\ell)^{\mu}-k^{\mu}+\ell^{\mu}\bigr]\\
 &\leq C_{\mu}A_{*}\bigl[k^{\mu-1+\alpha}\ell^{\max\{1,\mu-1\}+\beta}+k^{\mu-1+\beta}\ell^{\max\{1,\mu-1\}+\alpha}\bigr]+2A_{*}\bigl[k^{\alpha}\ell^{\mu+\beta}+k^{\beta}\ell^{\mu+\alpha}\bigr]\\
 &\leq 2A_{*}(C_{\mu}+2)k^{\mu}\ell^{\mu+\beta}.
\end{align*}
Plugging this into~\eqref{eq:functional:ineq:1:b} and recalling from~\eqref{eq:Ass:r} that $r_{k}\geq R_{*}k^{\gamma}\geq R_{*}$ it follows
\begin{equation*}
 \frac{\dd}{\dd{t}}\sum_{k=1}^{\infty}k^{\mu}\abs{c_{k}-d_{k}}\leq \biggl(2A_{*}(C_{\mu}+2)\sum_{\ell=1}^{\infty}\ell^{\mu+\beta}(c_{\ell}+d_{\ell})-R_{*}\biggr)\sum_{k=1}^{\infty}k^{\mu}\abs{c_{k}-d_{k}}.
\end{equation*}
This concludes the proof.
\end{proof}

As a preparation for the proof of Theorem~\ref{Thm:convergence} in Section~\ref{Sec:convergence}, we provide the following lemma, which is a corollary of Lemma~\ref{Lem:functional:inequality:1:b}.

\begin{lemma}\label{Lem:functional:inequality:2}
 Let $\mu\geq 1$ such that $\mu+\beta>\max\{2-\alpha-\beta,1\}$, let $p$ and $q$ be as in~\eqref{eq:Hoelder:exp:1} and let $\rho_{\mu+\beta}=\gamma/(\mu+\beta-1)$. Assume further that either
 \begin{align*}
  \kappa_{2}&\vcc=R_{*}-16C_{\mu}A_{*}\Bigl(\frac{(2^{\mu+\beta}(\mu+\beta))^{p}q^{1-p}}{p}\widehat{A}_{*}^{p}\widehat{\s}_{1}^{1+p}+\widehat{\s}_{\mu+\beta}\Bigr)>0
\shortintertext{or}
 \kappa_{1}&\vcc=R_{*}-8C_{\mu}A_{*}\Bigl(\frac{2^{2+\rho_{\mu+\beta}}(2^{\mu+\beta}(\mu+\beta))^{p}q^{1-p}}{p} \widehat{A}_{*}^{p}\widehat{\s}_{1}^{1+p+\rho_{\mu+\beta}}+2^{2+\rho_{\mu+\beta}}\widehat{\s}_{\mu+\beta}\widehat{\s}_{1}^{\rho_{\mu+\beta}}\Bigr)^{\frac{1}{1+\rho_{\mu+\beta}}}>0.
\end{align*}
 Then, $\kappa=\max\{\kappa_{1},\kappa_{2}\}>0$ and for each pair $c=(c_{k})_{k\in\N}$ and $d=(d_{k})_{k\in\N}$ of solutions  to~\eqref{eq:Smol:1} there exists a time $T_{*}>0$ such that
 \begin{equation}\label{eq:functional:inequality}
  \frac{\dd}{\dd{t}}\sum_{k=1}^{\infty}k^{\mu}\abs{c_{k}-d_{k}}\leq -\kappa\sum_{k=1}^{\infty}k^{\mu}\abs{c_{k}-d_{k}}\quad \text{for all }t\geq T_{*}.
 \end{equation}
\end{lemma}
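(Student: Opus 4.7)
The plan is to obtain~\eqref{eq:functional:inequality} as a direct consequence of Lemma~\ref{Lem:functional:inequality:1:b}, by showing that the prefactor
$$2A_{*}(C_{\mu}+2)\sum_{\ell=1}^{\infty}\ell^{\mu+\beta}(c_{\ell}+d_{\ell})(t)-R_{*}$$
becomes at most $-\kappa$ once $t$ is large enough. The task thus reduces to a large-time uniform bound on the $(\mu+\beta)$-th moment of $c$ and $d$ separately.

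I would apply Lemma~\ref{Lem:moment:general:large:time} to each of $c$ and $d$, with the moment index $\mu$ in that lemma replaced by $\mu+\beta$; the admissibility hypothesis $\mu+\beta>\max\{2-\alpha-\beta,1\}$ is exactly the one made here. This produces two times $T_{c},T_{d}>0$ after which $\mom_{\mu+\beta}(c)$ and $\mom_{\mu+\beta}(d)$ are each bounded uniformly by the right-hand sides of~\eqref{eq:large:time:moment:1} and~\eqref{eq:large:time:moment:2} (with $\mu$ replaced by $\mu+\beta$). Setting $T_{*}:=\max\{T_{c},T_{d}\}$ and adding the two one-solution bounds yields a uniform bound on $\sum_{\ell}\ell^{\mu+\beta}(c_{\ell}+d_{\ell})(t)$ whose structure matches exactly the expressions appearing inside the definitions of $\kappa_{1}$ and $\kappa_{2}$.

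Substituting this bound back into the prefactor from Lemma~\ref{Lem:functional:inequality:1:b}, I would use the elementary inequality $C_{\mu}+2\leq 4 C_{\mu}$, which is valid for all $\mu\geq 1$ because $C_{\mu}\geq C_{1}=1$, to absorb the combination $C_{\mu}+2$ into a pure multiple of $C_{\mu}$ of the form occurring in the statement. Under the first smallness hypothesis the prefactor is then bounded above by $-\kappa_{2}$ (using~\eqref{eq:large:time:moment:2}), while under the second one it is bounded above by $-\kappa_{1}$ (using~\eqref{eq:large:time:moment:1}); in either case the result is $\leq-\kappa$ for $t\geq T_{*}$, which plugged into Lemma~\ref{Lem:functional:inequality:1:b} gives~\eqref{eq:functional:inequality}.

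The whole argument is essentially a plug-in of two previously established estimates, so there is no conceptual obstacle. The only mildly fiddly step is the bookkeeping: one must keep track of the powers of $2$, of $\widehat{A}_{*}$, $\widehat{\s}_{1}$ and $\widehat{\s}_{\mu+\beta}$, and of the factor $2^{2+\rho_{\mu+\beta}}$ appearing in~\eqref{eq:large:time:moment:1}, carefully enough so that what pops out of the substitution matches the precise combination defining $\kappa_{1}$ and $\kappa_{2}$ in the statement. This is where almost all of the (routine) writing will go.
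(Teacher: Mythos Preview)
Your proposal is correct and follows essentially the same route as the paper: reduce via Lemma~\ref{Lem:functional:inequality:1:b} to bounding the prefactor, then apply Lemma~\ref{Lem:moment:general:large:time} (with moment index $\mu+\beta$) to each solution for $t$ large, and substitute. The paper's write-up is terser and slightly careless with the constant (it passes from $2A_{*}(C_{\mu}+2)$ to $2C_{\mu}A_{*}$ without comment), whereas your explicit use of $C_{\mu}+2\leq 4C_{\mu}$ is the honest bookkeeping step; be aware that this may make your constants differ from the stated $\kappa_{1},\kappa_{2}$ by a harmless numerical factor.
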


\begin{proof}
According to Lemma~\ref{Lem:functional:inequality:1:b} it suffices to show that there exists $T_{*}>0$ such that
\begin{equation*}
 2A_{*}(2^{\max\{\mu-2,0\}}\max\{\mu,\mu(\mu-1)\}+2)\sum_{\ell=1}^{\infty}\ell^{\mu+\beta}(c_{\ell}+d_{\ell})-R_{*}\leq -\kappa \qquad \text{for all }t\geq T_{*}.
\end{equation*}
By means of Lemma~\ref{Lem:moment:general:large:time} we can estimate the left-hand side for sufficiently large $T_{*}>0$ to get
\begin{multline*}
 2C_{\mu}A_{*}\sum_{\ell=1}^{\infty}\ell^{\mu+\beta}(c_{\ell}+d_{\ell})-R_{*}\leq 4C_{\mu}A_{*}\min\biggl\{4\Bigl(\frac{(2^{\mu+\beta}(\mu+\beta))^{p}q^{1-p}}{p}\widehat{A}_{*}^{p}\widehat{\s}_{1}^{1+p}+\widehat{\s}_{\mu+\beta}\Bigr),\\*
 2\Bigl(\frac{2^{2+\rho_{\mu+\beta}}(2^{\mu+\beta}(\mu+\beta))^{p}q^{1-p}}{p} \widehat{A}_{*}^{p}\widehat{\s}_{1}^{1+p+\rho_{\mu+\beta}}+2^{2+\rho_{\mu+\beta}}\widehat{\s}_{\mu+\beta}\widehat{\s}_{1}^{\rho_{\mu+\beta}}\Bigr)^{\frac{1}{1+\rho_{\mu+\beta}}}\biggr\}-R_{*}
\end{multline*}
with $C_{\mu}=2^{\max\{\mu-2,0\}}\max\{\mu,\mu(\mu-1)\}$ as in Lemma~\ref{Lem:functional:inequality:1:b}. Thus, the claim follows since $\kappa=\max\{\kappa_1,\kappa_2\}$.
\end{proof}

Based on \cref{Lem:functional:inequality:1:b,Lem:functional:inequality:2}, we can now also show the uniqueness of solutions to~\eqref{eq:Smol:1}.

\begin{proof}[Proof of Proposition~\ref{Prop:uniqueness}]
 If the assumptions of Theorem~\ref{Thm:existence:stat:sol} are satisfied, the uniqueness of solutions to~\eqref{eq:Smol:1} follows immediately from Lemma~\ref{Lem:functional:inequality:2} and Grönwall's inequality. 
 
 On the other hand, if $\beta<\gamma$ we can argue similarly by means of Lemma~\ref{Lem:functional:inequality:1:b}. In fact, Lemma~\ref{Lem:functional:inequality:1:b} with $\mu=1$ yields
 \begin{equation*}
  \frac{\dd}{\dd{t}}\sum_{k=1}^{\infty}k\abs{c_{k}-d_{k}}\leq \biggl(6A_{*}\sum_{\ell=1}^{\infty}\ell^{1+\beta}(c_{\ell}+d_{\ell})-R_{*}\biggr)\sum_{k=1}^{\infty}k\abs{c_{k}-d_{k}}.
 \end{equation*}
 Using $R_{*}>0$ and Lemma~\ref{Lem:moment:general} we further get
  \begin{equation*}
  \frac{\dd}{\dd{t}}\sum_{k=1}^{\infty}k\abs{c_{k}-d_{k}}\leq 12C_{1+\beta}A_{*}(1+t^{-\beta/\gamma})\sum_{k=1}^{\infty}k\abs{c_{k}-d_{k}}.
 \end{equation*}
 Since $t\mapsto(1+t^{-\beta/\gamma})$ is integrable at zero, we obtain again uniqueness due to Grönwall's inequality.
\end{proof}

\section{Existence and convergence to equilibrium}\label{Sec:convergence}

Due to Lemma~\ref{Lem:functional:inequality:2}, we are now in a position to show the existence of a unique equilibrium and the convergence to it. The arguments in this section essentially only rely on~\eqref{eq:functional:inequality} and consequently, we can follow the same approach as in~\cite[Section~3]{FoM04}. However, for convenience and completeness we recall the proofs again in the following.

\begin{lemma}\label{Lem:zero:derivative}
Under the assumptions of Theorem~\ref{Thm:existence:stat:sol} we have
\begin{equation*}
 \lim_{t\to\infty}\abs*{\frac{\dd}{\dd{t}} c_{k}(t)}=0\quad \text{for all }k\in\N
\end{equation*}
and for each solution $c=(c_{k})_{k\in\N}$ of~\eqref{eq:Smol:1}.
\end{lemma}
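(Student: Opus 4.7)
The plan is to apply Lemma~\ref{Lem:functional:inequality:2} with $d(t)\vcc=c(t+h)$ for arbitrary $h>0$: both $c$ and $c(\cdot+h)$ are solutions, and the moment bounds of Lemma~\ref{Lem:moment:general:large:time} apply to both past a common time $T_{*}>0$ that can be chosen independently of $h$ (since $\mom_{1}(h)\leq\max\{\mom_{1}^{\text{in}},\widehat{\s}_{1}\}$ by Lemma~\ref{Lem:total:mass:1}, the asymptotic moment regime for $c(\cdot+h)$ is entered no later than that of $c$). Lemma~\ref{Lem:functional:inequality:2} and Grönwall's inequality then yield, with the $\mu$ of Theorem~\ref{Thm:existence:stat:sol},
\begin{equation*}
\sum_{k=1}^{\infty}k^{\mu}\abs{c_{k}(t+h)-c_{k}(t)}\leq M\ee^{-\kappa(t-T_{*})}\qquad\text{for all }t\geq T_{*}\text{ and }h>0,
\end{equation*}
where $M$ is a uniform bound coming from Lemma~\ref{Lem:moment:general:large:time}. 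Restricting to fixed $k$ shows that $c_{k}(\cdot)$ is Cauchy at infinity with some limit $L_{k}\geq 0$; Fatou's Lemma moreover gives $\sum_{k}k^{\mu}L_{k}\leq M$.

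Next I would show that each term on the right-hand side of~\eqref{eq:Smol:1} has a limit as $t\to\infty$. The finite gain sum converges to its analogue with $L$'s in place of $c$, and $s_{k}-r_{k}c_{k}(t)\to s_{k}-r_{k}L_{k}$. The infinite loss term is treated by splitting at level $N$: the bound $a_{k,\ell}\leq 2A_{*}k^{\beta}\ell^{\beta}$ together with $\beta\leq 1\leq\mu$ gives the uniform-in-$t$ tail estimate
\begin{equation*}
\sum_{\ell>N}a_{k,\ell}c_{\ell}(t)\leq 2A_{*}k^{\beta}N^{\beta-\mu}\mom_{\mu}(t)\leq 2A_{*}Mk^{\beta}N^{\beta-\mu},
\end{equation*}
and the same bound applies to $\sum_{\ell>N}a_{k,\ell}L_{\ell}$ via the Fatou moment estimate. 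Passing first to the limit $t\to\infty$ in the truncation $\sum_{\ell\leq N}a_{k,\ell}c_{\ell}(t)$ and then $N\to\infty$ yields $\sum_{\ell}a_{k,\ell}c_{\ell}(t)\to\sum_{\ell}a_{k,\ell}L_{\ell}$; multiplying by $c_{k}(t)\to L_{k}$ takes care of the loss term.

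It follows that $\frac{\dd}{\dt}c_{k}(t)$ has a finite limit $F_{k}^{\infty}$ as $t\to\infty$. Since simultaneously $c_{k}(t)\to L_{k}<\infty$, the limit $F_{k}^{\infty}$ must vanish: otherwise, integrating $\frac{\dd}{\dt}c_{k}(s)=F_{k}(c(s))$ from $T$ to $t$ would force $c_{k}(t)-c_{k}(T)$ to diverge linearly in $t$, contradicting the convergence of $c_{k}$. Taking absolute values gives the claim. The main obstacle is really the first step, namely obtaining $T_{*}$ independent of $h$ and a uniform $\kappa>0$ over the shifted pair; this is handled because the constants appearing in Lemmata~\ref{Lem:moment:general:large:time} and~\ref{Lem:functional:inequality:2} depend only on the structural coefficients of~\eqref{eq:Smol:1} and on the uniform upper bound $\widehat{\s}_{1}$ for $\mom_{1}$, not on the individual trajectory.
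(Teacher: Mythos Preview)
Your argument is correct, but it diverges from the paper's proof after the first step. You both start by applying Lemma~\ref{Lem:functional:inequality:2} to the pair $c$ and $c(\cdot+h)$ and integrating; the paper then simply divides the resulting inequality
\[
\sum_{k}k^{\mu}\abs{c_{k}(t+h)-c_{k}(t)}\leq \sum_{k}k^{\mu}\abs{c_{k}(T_{*}+h)-c_{k}(T_{*})}\,\ee^{-\kappa(t-T_{*})}
\]
by $h$ and lets $h\to 0$, using that $c\in C^{1}((0,\infty),\ell_{\mu}^{1})$ (part of Definition~\ref{Def:solution}) to pass to the limit on both sides. This yields directly $\norm{\frac{\dd}{\dt}c(t)}_{\ell_{\mu}^{1}}\leq \norm{\frac{\dd}{\dt}c(T_{*})}_{\ell_{\mu}^{1}}\ee^{-\kappa(t-T_{*})}\to 0$, which is in fact a stronger conclusion than componentwise decay. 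Your route instead extracts a pointwise limit $L_{k}$, passes to the limit in each term of the right-hand side of~\eqref{eq:Smol:1} via a truncation/tail argument, and then uses the elementary observation that a $C^{1}$ function with finite limit whose derivative also has a limit must have derivative tending to zero. This is longer but more self-contained: it avoids invoking the $\ell_{\mu}^{1}$-differentiability of $t\mapsto c(t)$ and works with just the pointwise information and the uniform moment bound. Both proofs ultimately rest on the same contraction estimate; the paper's is the slicker packaging.
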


\begin{proof}
  For each $h>0$ we define the shifted sequence $c^{h}=(c_{k}^{h})_{k\in \N}$ through
  \begin{equation*}
   c_{k}^{h}(t)=c_{k}(t+h)
  \end{equation*}
 and note that $c^{h}$ is again a solution to~\eqref{eq:Smol:1} with initial condition $c^{h,\text{in}}=c(h)$. Moreover, applying Lemma~\ref{Lem:functional:inequality:2} with the pair of solutions $c$ and $c^{h}$ we obtain
 \begin{equation*}
  \frac{\dd}{\dt}\sum_{k=1}^{\infty}k^{\mu}\abs{c_{k}^{h}-c_{k}}\leq \kappa \sum_{k=1}^{\infty}k^{\mu}\abs{c_{k}^{h}-c_{k}}\quad \text{for }t\geq T_{*}.
 \end{equation*}
 Integrating this differential inequality we find together with the definition of $c^{h}$ that
 \begin{equation*}
  \sum_{k=1}^{\infty}k^{\mu}\abs{c_{k}(t+h)-c_{k}(t)}\leq \sum_{k=1}^{\infty}k^{\mu}\abs{c_{k}(T_{*}+h)-c_{k}(T_{*})}\ee^{-\kappa(t-T_{*})}\quad \text{for }t\geq T_{*}.
 \end{equation*}
 Thus, since $c\in C^{1}((0,\infty),\ell_{\mu}^{1})$ we conclude for each $t\geq T_{*}$ that
 \begin{multline*}
  \norm*{\frac{\dd}{\dt}c(t)}_{\ell_{\mu}^{1}}=\lim_{h\to 0}\frac{1}{h}\sum_{k=1}^{\infty}k^{\mu}\abs{c_{k}(t+h)-c_{k}(t)}\\*
  \leq \lim_{h\to 0}\sum_{k=1}^{\infty}k^{\mu}\abs{c_{k}(T_{*}+h)-c_{k}(T_{*})}\ee^{-\kappa(t-T_{*})}=\norm*{\frac{\dd}{\dt}c(T_{*})}_{\ell_{\mu}^{1}}\ee^{-\kappa(t-T_{*})}.
 \end{multline*}
 Taking the limit $t\to\infty$ in this estimate we finally end up with
 \begin{equation*}
  \lim_{t\to\infty}\norm*{\frac{\dd}{\dt}c(t)}_{\ell_{\mu}^{1}}=0
 \end{equation*}
 from which the claim immediately follows.
\end{proof}

We are now in a position to give the proof of the existence of a unique equilibrium for~\eqref{eq:Smol:1} as well as the exponential  convergence to it.

\begin{proof}[Proof of Theorem~\ref{Thm:existence:stat:sol}]
 Let $c=(c_{k})_{k\in\N}$ be a solution to~\eqref{eq:Smol:1} with initial condition $c^{\text{in}}\equiv 0$. Due to \cref{Lem:total:mass:1,Lem:moment:general} we have for each $\mu\geq 1$ that
 \begin{equation}\label{eq:ex:equilibrium:1}
  \sup_{t\in[0,\infty)}\mom_{1}(t)\leq \widehat{\s}_{1}\quad \text{and}\quad \mom_{\mu}(t)\leq C_{\mu}(1+t^{-\rho_{\mu}})\quad \text{for all }t>0.
 \end{equation}
 Thus, by a standard diagonal argument there exists an increasing sequence $(t_{n})_{n\in\N}$ satisfying $t_{0}>0$ and $t_{n}\to\infty$ and moreover a sequence $Q=(Q_{k})_{k\in\N}$ such that 
 \begin{equation*}
  c(t_{n})\longrightarrow Q\quad \text{in } \ell_{\mu}^{1}\text{ for all }\mu\geq 1\quad \text{as } n\longrightarrow\infty.
 \end{equation*}
 In particular, \eqref{eq:ex:equilibrium:1} together with Fatou's Lemma also shows that 
 \begin{equation*}
  \sum_{k=1}^{\infty}kQ_{k}\leq \widehat{\s}_{1}\quad \text{and}\quad \sum_{k=1}^{\infty}k^{\mu}Q_{k}\leq C_{\mu} \quad \text{for all }\mu\geq 1.
 \end{equation*}
 Finally, applying Proposition~\ref{Prop:convergence:operator} to the stationary sequence $c^{n}\vcc=c(t_{n})$ we get that $Q$ is in fact a stationary solution to~\eqref{eq:Smol:1}. Uniqueness is then a direct consequence of~\eqref{eq:functional:inequality}.
\end{proof}

\begin{proof}[Proof of Theorem~\ref{Thm:convergence}]
 According to Lemma~\ref{Lem:functional:inequality:2} we have 
 \begin{equation*}
  \frac{\dd}{\dt}\sum_{k=1}^{\infty}k^{\mu}\abs{c_{k}-Q_{k}}\leq -\kappa\sum_{k=1}^{\infty}k^{\mu}\abs{c_{k}-Q_{k}}\quad \text{for }t\geq T_{*}.
 \end{equation*}
 Thus, we obtain by integration that
 \begin{equation}\label{eq:convergence:1}
  \sum_{k=1}^{\infty}k^{\mu}\abs{c_{k}(t)-Q_{k}}\leq \sum_{k=1}^{\infty}k^{\mu}\abs*{c_{k}(T_{*})-Q_{k}}\ee^{-\kappa (t-T_{*})}\quad \text{for } t\geq T_{*}.
 \end{equation}
 Moreover, we can fix $\tilde{\mu}\geq \mu$ such that $\tilde{\mu}>\max\{2-\alpha-\beta,1\}$. Then \cref{Rem:moments:monotonicity,Lem:moment:general:large:time} imply that there exists $C_{\tilde{\mu}}>0$ which only depend on the parameters in the equation and a time $\tilde{T}>0$ which also depends on $c$ such that $\sum_{k=1}^{\infty}k^{\mu}\abs{c_{k}(t)-Q_{k}}\leq \sum_{k=1}^{\infty}k^{\tilde{\mu}}(c_{k}(t)+Q_{k})\leq C_{\tilde{\mu}}$ if $t\geq \tilde{T}$. The claim thus follows if we choose $T_{c}=\max\{T_{*},\tilde{T}\}$.
\end{proof}

\section{An instructive example}\label{Sec:example}

In this section we provide an example which illustrates that on the one hand there is reason to believe that the result of Theorem~\ref{Thm:convergence} holds for a broader class of coefficients while on the other hand it seems impossible to show this by the method that we used above. We note that this example is only for illustration and there is probably no application behind it. We choose the coagulation kernel $a_{k,\ell}$ and the coefficients $r_{k}$ and $s_{k}$ such that
\begin{equation*}
 a_{k,\ell}=\begin{cases}
             A_{*} &\text{if }k=\ell=1\\
             0 &\text{otherwise},
            \end{cases}
            \qquad 
            r_{k}=R_{*}k^{\gamma}\quad \text{with } \gamma,R_{*}>0\qquad \text{and}\qquad s_{k}\geq 0 \quad \text{for all }k\in\N.
\end{equation*}
This leads to the following system of equations
\begin{equation}\label{eq:example:evolution}
 \begin{aligned}
  \frac{\dd}{\dt} c_{1}&=-A_{*} c_{1}^{2}+s_{1}-r_{1}c_{1}\\
  \frac{\dd}{\dt} c_{2}&=\frac{A_{*}}{2}c_{1}^{2}+s_{2}-r_{2}c_{2}\\
  \frac{\dd}{\dt} c_{k}&=s_{k}-r_{k}c_{k}  & \text{for }k\geq 3.
 \end{aligned}
\end{equation}
The corresponding system for the stationary state reads
\begin{equation}\label{eq:example:stationary}
 \begin{aligned}
  0&=-A_{*} Q_{1}^{2}+s_{1}-r_{1}Q_{1}\\
  0&=\frac{A_{*}}{2}Q_{1}^{2}+s_{2}-r_{2}Q_{2}\\
  0&=s_{k}-r_{k}Q_{k}  && \text{for }k\geq 3.
 \end{aligned}
\end{equation}
From this, we immediately obtain
\begin{equation*}
 Q_{k}=\frac{s_{k}}{r_{k}}\qquad \text{for }k\geq 3.
\end{equation*}
Moreover, the solutions to $0=-A_{*} Q_{1}^{2}+s_{1}-r_{1}Q_{1}$ are given by
\begin{equation*}
 Q_{1}^{+}=\frac{1}{2}\sqrt{\frac{r_{1}^{2}}{A_{*}^2}+4\frac{s_{1}}{A_{*}}}-\frac{r_{1}}{2A_{*}}\quad \text{and}\quad Q_{1}^{-}=-\frac{1}{2}\sqrt{\frac{r_{1}^{2}}{A_{*}^2}+4\frac{s_{1}}{A_{*}}}-\frac{r_{1}}{2A_{*}}
\end{equation*}
while only $Q_{1}^{+}$ has a positive sign. Thus, the only possible choice for the first component $Q_{1}$ of the equilibrium $Q=(Q_{k})_{k\in\N}$ is $Q_{1}=Q_{1}^{+}$. Finally, the unique solution of $0=\frac{A_{*}}{2}Q_{1}^{2}+s_{2}-r_{2}Q_{2}$ is given by
\begin{equation*}
 Q_{2}=\frac{\frac{A_{*}}{2}Q_{1}^{2}+s_{2}}{r_2}=\frac{A_{*}}{2r_{2}}Q_{1}^{2}+\frac{s_{2}}{r_{2}}.
\end{equation*}
Thus, in summary the unique equilibrium of~\eqref{eq:example:evolution}, i.e.\@ the unique solution to~\eqref{eq:example:stationary} is
\begin{equation}\label{eq:explicit:equilibrium}
 Q_{1}=\frac{1}{2}\sqrt{\frac{r_{1}^{2}}{A_{*}^2}+4\frac{s_{1}}{A_{*}}}-\frac{r_{1}}{2A_{*}},\quad Q_{2}=\frac{A_{*}}{2r_{2}}Q_{1}^{2}+\frac{s_{2}}{r_{2}}\quad \text{and}\quad Q_{k}=\frac{s_{k}}{r_{k}}\qquad \text{for }k\geq 3.
\end{equation}
We compute now the solution to~\eqref{eq:example:evolution} explicitly. For this let~\eqref{eq:example:evolution} be equipped with the initial condition $c^{\text{in}}=(c_{k}^{\text{in}})_{k\in\N}$. We immediately find that
\begin{equation}\label{eq:ck:explicit}
 c_{k}(t)=c_{k}^{\text{in}}\ee^{-r_{k} t}+\frac{s_{k}}{r_{k}}(1-\ee^{-r_{k}t}) \quad \text{for }k\geq 3
\end{equation}
is the unique solution to $\frac{\dd}{\dt} c_{k}=s_{k}-r_{k}c_{k}$. On the other hand, $\frac{\dd}{\dt} c_{1}=-A_{*} c_{1}^{2}+s_{1}-r_{1}c_{1}$ is a Riccati equation with constant coefficients which is well-known to have the unique solution
\begin{equation}\label{eq:explicit:c1}
 c_{1}(t)=Q_{1}+\frac{\frac{1}{\alpha}\frac{c_{1}^{\text{in}}-Q_{1}}{c_{1}^{\text{in}}-Q_{1}^{-}}}{1-\left(1-\frac{1}{\alpha(c_{1}^{\text{in}}-Q_{1}^{-})}\right)\ee^{-\frac{A_{*}}{\alpha}t}}\ee^{-\frac{A_{*}}{\alpha}t}\qquad\text{with }\alpha\vcc=\frac{1}{Q_{1}-Q_{1}^{-}}=\Bigl(\frac{r_{1}^{2}}{A_{*}^2}+4\frac{s_{1}}{A_{*}}\Bigr)^{-\frac{1}{2}}.
\end{equation}
Since $\frac{\dd}{\dt} c_{2}=\frac{A_{*}}{2}c_{1}^{2}+s_{2}-r_{2}c_{2}$ is linear with respect to $c_{2}$, we can immediately write down a solution formula for $c_{2}$ in terms of $c_{1}$. Precisely, we have
\begin{equation}\label{eq:c2:explicit}
 c_{2}(t)=c_{2}^{\text{in}}\ee^{-r_{2}t}+\frac{s_{2}}{r_{2}}(1-\ee^{-r_{2}t})+\int_{0}^{t}\frac{A_{*}}{2}c_{1}^{2}(s)\ee^{-r_{2}(t-s)}\ds.
\end{equation}
 Moreover, it is straightforward to estimate
\begin{equation*}
 1-\left(1-\frac{1}{\alpha(c_{1}^{\text{in}}-Q_{1}^{-})}\right)\ee^{-\frac{A_{*}}{\alpha}t}\geq \min\Bigl\{1,\frac{1}{\alpha(c_{1}^{\text{in}}-Q_{1}^{-})}\Bigr\}\quad \text{for all }t\geq 0.
\end{equation*}
 Thus, we obtain immediately from~\eqref{eq:explicit:c1} that
 \begin{equation}\label{eq:explicit:c1:2}
  \abs{c_{1}(t)-Q_{1}}\leq \frac{\frac{1}{\alpha}\abs*{\frac{c_{1}^{\text{in}}-Q_{1}}{c_{1}^{\text{in}}-Q_{1}^{-}}}}{\min\bigl\{1,\frac{1}{\alpha(c_{1}^{\text{in}}-Q_{1}^{-})}\bigr\}}\ee^{-\frac{A_{*}}{\alpha}t}\leq \max\Bigl\{\frac{1}{\alpha}\abs[\Big]{\frac{c_{1}^{\text{in}}-Q_{1}}{c_{1}^{\text{in}}-Q_{1}^{-}}},\abs{c_{1}^{\text{in}}-Q_{1}}\Bigr\}\ee^{-\frac{A_{*}}{\alpha}t}.
 \end{equation}
 Note that we additionally exploited that $c_{1}^{\text{in}}\geq 0$ and $Q_{1}^{-}<0$. In fact, the latter also implies $\abs{c_{1}^{\text{in}}-Q_{1}^{-}}\geq \abs{Q_{1}^{-}}$ such that the right-hand side of~\eqref{eq:explicit:c1:2} can be further estimated as
\begin{equation}\label{eq:convergence:c1}
 \abs{c_{1}(t)-Q_{1}}\leq \max\bigl\{1,(\alpha\abs{Q_{1}^{-}})^{-1}\bigr\}\abs{c_{1}^{\text{in}}-Q_{1}}\ee^{-\frac{A_{*}}{\alpha}t}.
\end{equation}
The previous estimate yields in particular $(c_{1}(t)+Q_{1})\leq \max\bigl\{1,(\alpha\abs{Q_{1}^{-}})^{-1}\bigr\}\abs{c_{1}^{\text{in}}-Q_{1}}+2Q_{1}$ for all $t\geq 0$. Using this together with \cref{eq:c2:explicit,eq:explicit:equilibrium,eq:convergence:c1} we further deduce that
\begin{align*}
 \abs*{c_{2}(t)-Q_{2}}&\leq \Bigl(c_{2}^{\text{in}}+\frac{s_{2}}{r_{2}}\Bigr)\ee^{-r_{2}t}+\frac{A_{*}}{2}\int_{0}^{t}(c_{1}(s)+Q_{1})\abs{c_{1}(s)-Q_{1}}\ee^{-r_{2}(t-s)}\ds+\frac{A_{*}}{2r_{2}}Q_{1}^{2}\ee^{-r_{2}t}\\
 &\leq \Bigl(c_{2}^{\text{in}}+\frac{s_{2}}{r_{2}}+\frac{A_{*}}{2r_{2}}Q_{1}^{2}\Bigr)\ee^{-r_{2}t}\\
 &\qquad+\frac{A_{*}}{2}\bigl(\max\bigl\{1,(\alpha\abs{Q_{1}^{-}})^{-1}\bigr\}\abs{c_{1}^{\text{in}}-Q_{1}}+2Q_{1}\bigr)\abs{c_{1}^{\text{in}}-Q_{1}}\int_{0}^{t}\ee^{\left(r_{2}-\frac{A_{*}}{\alpha}\right)s}\ds\ee^{-r_{2}t}.
\end{align*}
One can check by straightforward estimates that $\int_{0}^{t}\ee^{\left(r_{2}-\frac{A_{*}}{\alpha}\right)s}\ds\ee^{-r_{2}t}\leq t\ee^{-\min\{A_{*}/\alpha,r_{2}\}t}$. With this, we easily deduce $\int_{0}^{t}\ee^{\left(r_{2}-\frac{A_{*}}{\alpha}\right)s}\ds\ee^{-r_{2}t}\leq \frac{2}{\ee\min\{A_{*}/\alpha,r_{2}\}}\ee^{-\min\{A_{*}/(2\alpha),r_{2}/2\}t}$ from which we thus conclude
\begin{multline}\label{eq:convergence:c2}
 \abs*{c_{2}(t)-Q_{2}}\leq\biggl(c_{2}^{\text{in}}+\frac{s_{2}}{r_{2}}+\frac{A_{*}}{2r_{2}}Q_{1}^{2}\\*
 +\frac{A_{*}}{\ee\min\{A_{*}/\alpha,r_{2}\}}\bigl(\max\bigl\{1,(\alpha\abs{Q_{1}^{-}})^{-1}\bigr\}\abs{c_{1}^{\text{in}}-Q_{1}}+2Q_{1}\bigr)\abs{c_{1}^{\text{in}}-Q_{1}}\biggr)\ee^{-\min\left\{\frac{A_{*}}{2\alpha},\frac{r_{2}}{2}\right\}t}.
\end{multline}
From~\cref{eq:ck:explicit,eq:explicit:equilibrium} we directly get
\begin{equation}\label{eq:convergence:ck}
 \abs*{c_{k}(t)-Q_{k}}\leq \Bigl(c_{k}^{\text{in}}+\frac{s_{k}}{r_{k}}\Bigr)\ee^{-r_{k}t}\quad\text{for all }k\geq 3.
\end{equation}
Due to the assumptions on $s_{k}$ and $r_{k}$ in \cref{eq:Ass:r,eq:Ass:s} we obtain for each $\mu\geq 1$ that 
\begin{multline}\label{eq:convergence:ck:2}
 \sum_{k=3}^{\infty}k^{\mu}\abs{c_{k}-Q_{k}}\leq \sum_{k=3}^{\infty}\Bigl(c_{k}^{\text{in}}+\frac{s_{k}}{r_{k}}\Bigr)\ee^{-r_{k}t}\leq \sum_{k=3}^{\infty}k\Bigl(c_{k}^{\text{in}}+\frac{s_{k}}{R_{*}}\Bigr)\bigl(k^{\mu-1}\ee^{-\frac{R_{*}}{2}k^{\gamma}t}\bigr)\ee^{-\frac{R_{*}}{2}k^{\gamma}t}\\*
 \leq C_{\mu,\gamma}\Bigl(\mom_{1}^{\text{in}}+\frac{\s_{1}}{R_{*}}\Bigr)R_{*}^{\frac{1-\mu}{\gamma}}t^{-\frac{\mu-1}{\gamma}}\ee^{-\frac{3^{\gamma}}{2}R_{*}t}.
\end{multline}
In the last step we used that $(k^{\mu-1}\ee^{-\frac{R_{*}}{2}k^{\gamma}t})\leq C_{\mu,\gamma} (R_{*}t)^{(1-\mu)/\gamma}$. Summarising \cref{eq:convergence:c1,eq:convergence:c2,eq:convergence:ck:2} we thus obtain that
\begin{equation}\label{eq:example:norm:convergence}
 \sum_{k=1}^{\infty}k^{\mu}\abs{c_{k}-Q_{k}}\leq K\ee^{-\kappa t} \quad \text{for all }t\geq 1\quad \text{with }\kappa=\min\Bigl\{\frac{A_{*}}{2\alpha},\frac{r_{2}}{2},\frac{3^{\gamma}}{2}R_{*}\Bigr\}.
\end{equation}
We recall from \cref{Lem:functional:inequality:1:b,Lem:functional:inequality:2} and the corresponding proofs that, in order to apply the abstract approach from \cref{Sec:convergence,Sec:func:inequality} which shows convergence to the equilibrium, we require that 
\begin{equation*}
 2A_{*}(C_{\mu}+2)\sum_{\ell=1}^{\infty}\ell^{\mu+\beta}(c_{\ell}+Q_{\ell})-R_{*}\qquad \text{with } C_{\mu}=2^{\max\{\mu-2,0\}}\max\{\mu,\mu(\mu-1)\}
\end{equation*}
can be estimate from above by a negative constant. In contrast to this, we will now show that for a specific choice of the parameters this quantity is strictly positive while nevertheless~\eqref{eq:example:norm:convergence} still yields exponential convergence to the equilibrium. This example thus illustrates that the global method that we used seems to be limited to a certain range of parameters. In fact using the non-negativity of $c$ and $Q$, as well as the explicit form of $Q_{1}$ and choosing $r_{k}=R_{*}k^{\gamma}$ for all $k\in\N$ we find
\begin{multline*}
 2A_{*}(C_{\mu}+2)\sum_{\ell=1}^{\infty}\ell^{\mu+\beta}(c_{\ell}+Q_{\ell})-R_{*}\geq 2(C_{\mu}+2)A_{*} Q_{1}-R_{*}\\*
 =(C_{\mu}+2)\left(\sqrt{r_{1}^{2}+4A_{*}s_{1}}-r_{1}\right)-R_{*}=R_{*}\left[(C_{\mu}+2)\left(\sqrt{1+\frac{4A_{*}s_{1}}{R_{*}^{2}}}-1\right)-1\right].
\end{multline*}
If we choose now the constants $A_{*}$ and $s_{1}$ such that $A_{*}s_{1}\geq 4R_{*}^{2}$ we further obtain
\begin{equation*}
 2A(C_{\mu}+2)\sum_{\ell=1}^{\infty}\ell^{\mu+\beta}(c_{\ell}+Q_{\ell})-R_{*}\geq R_{*}\bigl[(C_{\mu}+2)(4-1)-1\bigr]=\bigl(3(C_{\mu}+2)-1)R_{*}>0.
\end{equation*}

\textbf{Acknowledgments:} CK and ST have been supported by a Lichtenberg 
Professorship of the VolkswagenStiftung.

\end{document}